\newtheorem{theorem}{Theorem}[section]
\newtheorem{corollary}[theorem]{Corollary}
\newtheorem{lemma}[theorem]{Lemma}
\newtheorem{proposition}[theorem]{Proposition}
\newtheorem{claim}[theorem]{Claim}
\newtheorem{definition}[theorem]{Definition}
\newtheorem{conjecture}[theorem]{Conjecture}
\newcommand{\ra}{\rightarrow}
\newcommand{\eps}{\varepsilon}
\newcommand{\xhdr}[1]{\paragraph{\bf #1}}
\newcommand{\omt}[1]{}
\newcommand{\R} {\ensuremath{\mathbb{R}}} 
\def\squarebox#1{\hbox to #1{\hfill\vbox to #1{\vfill}}}
\newcommand{\qed}{\hspace*{\fill}\vbox{\hrule\hbox{\vrule\squarebox{.667em}\vrule}\hrule}\smallskip}
\newenvironment{proof}{\noindent{\bf Proof:~~}}{\(\qed\)}
\newcommand{\w}{\alpha}         
\newcommand{\interact}{S}       
\newcommand{\influence}{T}      
\newcommand{\intgraph}{\mathcal {\interact} }  
\newcommand{\infgraph}{\mathcal {\influence} } 
\newcommand{\wn}{\mathcal{M}}   
\newcommand{\normalize}{\wn}   
\newcommand{\netflow}{flow\xspace} 
\newcommand{\netflows} {flows\xspace} 
\newcommand{\flow}[2]{f_{#1\to #2}} 
\newcommand{\ip}{p} 
\newcommand{\activegraph}{\infgraph_{\mathtt{act}}}  
\begin{document}

\title{Selection and Influence in Cultural Dynamics%
\footnote{%
A one-page abstract of this work has appeared in \emph{ACM Conf. on Electronic Commerce}, 2013.\newline
This work has been supported in part by
a Simons Investigator Award,
a Google Research Grant,
an ARO MURI grant,
and NSF grants
IIS-0910664, 
CCF-0910940, 
and
IIS-1016099. 
}
}

\author{
David Kempe
\thanks{
Department of Computer Science,
University of Southern California, Los Angeles CA 90089-0781, USA.
Email: dkempe@usc.edu.
}
\and
Jon Kleinberg
\thanks{
Department of Computer Science,
Cornell University, Ithaca NY 14853, USA.
Email: kleinber@cs.cornell.edu.
}
 \and
Sigal Oren
\thanks{Department of Computer Science,
Ben-Gurion University of the Negev, Beer Sheva 8410501, Israel.
This work has been done while S. Oren was a graduate student at Cornell University and a research intern at Microsoft Research. Email: sigalo@cs.huji.ac.il.}
 \and
Aleksandrs Slivkins
\thanks{
Microsoft Research, New York, NY 10011, USA.
Email: slivkins@microsoft.com.
}
}

\date{First version: April 2013\\This version: October 2015}

\begin{titlepage}
\maketitle

\begin{abstract}
One of the fundamental principles driving diversity or homogeneity in
domains such as cultural differentiation, political affiliation, and
product adoption is the tension between two forces: influence (the
tendency of people to become similar to others they interact with) and
selection (the tendency to be affected most by the behavior of others
who are already similar). Influence tends to promote homogeneity
within a society, while selection frequently causes
fragmentation. When both forces act simultaneously, it becomes an
interesting question to analyze which societal outcomes should be
expected.

To study this issue more formally, we analyze a natural
stylized model built upon active lines of work in political
opinion formation, cultural diversity, and language evolution.
We assume that the population is partitioned into ``types'' according to
some traits (such as language spoken or political affiliation). While
all types of people interact with one another, only people with
sufficiently similar types can possibly influence one another. The
``similarity'' is captured by a graph on types in which individuals of
the same or adjacent types can influence one another. We achieve an
essentially complete characterization of (stable) equilibrium outcomes
and prove convergence from all starting states.
We also consider generalizations of this model.
\end{abstract}

{\bf Keywords:} social networks, selection, influence, opinion formation.

\thispagestyle{empty}
\end{titlepage}



%
%
%
%


\section{Introduction} \label{sec-intro}
\xhdr{Selection and Influence.}
Human societies exhibit many forms of cultural diversity ---
in the languages that are spoken, in the opinions and values that
are held, and in many other dimensions.
An active body of research in the mathematical social sciences has
developed models for reasoning about the origins of this diversity,
and about how it evolves over time.

One of the fundamental principles driving cultural diversity is
the tension between two forces: influence and selection.
\emph{Influence} refers to the tendency of people to become similar to
those with whom they interact, whereas \emph{selection} (or
  \emph{choice homophily} \cite{mcpherson-homophily}) is the tendency
of people to interact with those who are more similar to them, and/or
to be more receptive to influence from those who are similar.%
\footnote{While \emph{selection} may sometimes have causes other than similarity, such as attraction of the opposites or triadic closure, we focus on similarity-driven selection throughout this paper.
We use the term {\em selection} rather than {\em homophily} because the latter is sometimes used to refer to the broader fact that people tend to be similar to their neighbors in a social network, regardless of the mechanism leading to this similarity.}

Both of these forces lead toward outcomes in which people
end up interacting with others like themselves, but
in different ways:
influence tends to promote homogeneity, as people shift their
behaviors to become alike, while
selection tends to promote fragmentation, in which a society can
split into multiple groups that have less and less interaction with
each other.
Research that offers qualitative analyses for issues such as
consensus-building, political polarization, or social stratification
can often be interpreted through the lens of this
influence-selection trade-off
\cite{cohen-peer-group,kandel-socialization,mcpherson-homophily}.
The trade-off between influence and selection, and the
development of data analysis techniques to try separating
the effects of the two,
have been integral to understanding and promoting the adoption of
products and behaviors in social networks
\cite{anagnostopoulos-selection-influence,aral-selection-influence,bramoulle-peer-effects,lafond:neville:tests,shalizi-selection-influence},
an active line of work at the interface of computing, economics, and statistics.

When both influence and selection are operating at the same time, how should
we reason about their combined effects?
In particular, as Axelrod~\cite[p.~203]{axelrod-culture} asked:
\begin{quote}
If people who are similar to one another tend to become more alike in their
beliefs, attitudes, and behavior when they interact, why do not all such
differences eventually disappear?
\end{quote}


Several lines of modeling work have approached this question, all starting
from similar underlying motivations, but developing different mathematical
formalisms.

\begin{enumerate}
\item Research on political opinions has studied populations in which
each person holds an opinion. The opinion is represented by a number
drawn from a bounded interval on the real line $\R^1$, or from a discrete
set of points in an interval.
(For example, the interval may represent
the political spectrum from liberal to conservative.)
Each person is influenced by the opinions of others
who are sufficiently nearby on the interval, thus capturing the
interplay between influence (people are shifting their opinions
based on the opinions of others) and selection (people only pay
attention to others whose opinions are sufficiently close)
\cite{ben-naim-opinions,deffuant-opinions,hegselmann-opinions}.
\item Axelrod proposed a model of cultural diversity in which
there are several {\em dimensions} of culture, and each person
has a value associated with each dimension (e.g., a choice of
language, religion, or political affiliation).
Agents are more likely to interact when they agree on more
dimensions; when two people interact, one person randomly
chooses a dimension in which they differ, and changes his value
so that they now match in this dimension \cite{axelrod-culture}.
For example, two people who have passions for similar sports and
styles of food may end up having an easier time
(and more opportunity for) associating, and
hence an easier time influencing one another along another
dimension such as religious beliefs.
Again, the model represents an influence process in which the
interactions are governed by selection based on (cultural) similarity.
Axelrod's model has generated a large amount of subsequent work;
see \cite{castellano-sociophysics} for a survey.
\item Finally, Abrams and Strogatz exhibited some of the interesting
effects that can occur even when there are only two types of people.
They modeled a scenario in which people speak one of two
languages. People mainly interact with speakers of their own
language, but there is gradual ``leakage'' over time as speakers of one
language may convert to become speakers of the other
\cite{abrams-strogatz-language}.
The Abrams-Strogatz model has also generated an active line of
follow-up results, including explorations of its
microfoundations through agent-based simulation
\cite{stauffer-micro-abrams-strogatz} and analyses of the
spatial effects and population density \cite{patriarca-language-competition}.
\end{enumerate}

\xhdr{Commonalities among Models.}
Although the models described above differ in many details, they have
the same underlying structure: the population is divided into
a set of {\em types} (the opinions, the cultural choices,
the language spoken), and a person of any given type may be
influenced to switch types, but only by others whose types
are sufficiently similar.
(In the case of the Abrams-Strogatz model, there is a preference
for one's own type, but since there are only two types,
all types can influence each other.)
This process generates a ``flow'' as people migrate
among different types, and we can ask questions about both dynamics
(which outcomes the process will reach) and equilibria
(which outcomes are self-sustaining, in the sense that
the flows between types preserve the fraction of people
who belong to each type).
Following the language around Axelrod's work,
we will refer to this type of process as representing the
{\em cultural dynamics} of the population.

In addition to their similarities in structure, these
cultural dynamics models also agree in their broad conclusions.
In the first two models, the population
gradually separates into distinct ``islands''
in the space of possible types;
subsequently, no further interaction between the islands is possible.
In the Abrams-Strogatz model, with just two types,
the only outcomes that are stable under perturbations are
the two extreme outcomes in which everyone ends up belonging to the
same type.
Typically, there is also an unstable equilibrium in which
each language is spoken by a non-zero fraction of the population.

The most salient difference among the models is the structure
that is imposed on the set of types.
In each case, there is an undirected {\em influence graph} $\infgraph$ on the set of types:
when a person of type $u$ interacts with a person of type $v$,
the person of type $u$ has the potential to switch to
{(or move towards)} $v$ {\em provided}
that $u$ and $v$ are neighbors in $\infgraph$ (i.e., provided that $u$ and $v$
are sufficiently similar according to the interpretation of the model).
In the models of one-dimensional opinion dynamics on a discrete set,
the graph $\infgraph$ is the $k^{\rm th}$ power of a path for some $k \geq 1$
(types are similar enough when they are within $k$ steps on the path);
in Axelrod's model, the graph $\infgraph$ is the $k^{\rm th}$ power of a
(not necessarily binary) hypercube.
The Abrams-Strogatz model shows that these kinds of processes can exhibit
subtle behavior even on a two-node influence graph $\infgraph$.

\subsection*{The present work:
Cultural Dynamics on an Arbitrary Influence Graph}
All of the prior results apply only to highly structured,
symmetric graphs (essentially hypercubes and paths),
whereas in some of the settings that the models seek to capture, the
set of types can have a less orderly structure.
As one simple example, consider a subgraph of the
  ``religion'' graph (depicted in Figure~\ref{fig:religion}), with the
following 6 types:
agnostics (AG), atheists (AT), casual protestants (CP), devout
protestants (DP), casual catholics (CC), and devout catholics (DC).
Here, it is reasonable to assume that transitions happen between the
casual versions of each belief, or between casual and devout
versions of the same belief. In other words, the graph would consist
of a triangle AG-CP-CC, and edges AT-AG, DP-CP, DC-CC.\footnote{%
We may expect transitions between other states to happen, albeit with
much smaller probability. We will discuss this issue more in
Section~\ref{sec:modeling-choices}.}

\begin{figure}[htb]
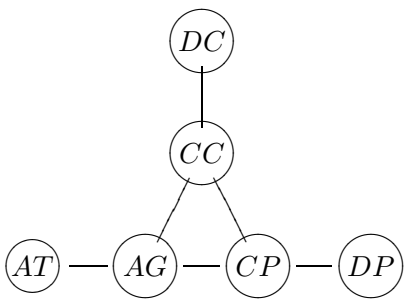

\begin{align*}
\xygraph{ !{<0cm,0cm>;<1.5cm,0cm>:<0cm,1.5cm>::} !{(1,0)}*+[o]+[F]{AT}="1" !{(2,0) }*+[o]+[F]{AG}="2" !{(3,0) }*+[o]+[F]{CP}="3" !{(4,0)}*+[o]+[F]{DP}="4" !{(2.5,1) }*+[o]+[F]{CC}="5" !{(2.5,2) }*+[o]+[F]{DC}="6" "1"-"2" "2"-"3" "3"-"4" "2"-"5" "3"-"5" "5"-"6"}
\end{align*}
\caption{\label{fig:religion}
A small subgraph of the ``religion'' graph, showing edges along which
transitions may happen between agnostics, atheists, casual
protestants, devout protestants, casual catholics, and devout catholics.
}

\end{figure}



To ensure that insights derived from the analysis of a model (such as
the ones for hypercubes and graphs) are not limited to those specific
models, and to further understand the governing principles, it is
desirable to understand the dynamics and equilibria of the process in
more general graphs.

This is the problem we address in the present work,
where we develop techniques for resolving some of the main questions
on arbitrary graphs, under a clean and stylized model of interactions.
For a natural formulation of cultural dynamics on an arbitrary
influence graph (which we refer to as the \emph{global model},
for reasons explained later),
we prove convergence results and precisely characterize
the set of all stable equilibria.
We then consider generalizations of the global model,
extending some of our convergence and stability
results to these more general settings
and posing several open questions.

\xhdr{The Global Model.}
We now describe the global model in more detail.
Because the models from the earlier lines of work discussed above differ
in many of their details, there is no meaningful way to
simultaneously generalize all of them in a precise syntactic sense.
Instead, our goal is to formulate a version of cultural dynamics
that exhibits the same basic interplay of selection and influence ---
specifically, the idea that influence only happens among types that
are ``close together'' --- while allowing for an arbitrary graph
on the set of types.

Let $\infgraph$ be a graph on a finite set of types $V$
of cardinality $n = |V|$;
for each type $u \in V$, let $\influence_u \subseteq V$
denote the set of $u$'s neighbors in $\infgraph$.
As is standard in many of the approaches to cultural dynamics, we model
the population as a continuum\footnote{This and other
    modeling choices are discussed in more detail in
    Section~\ref{sec:modeling-choices}.}:
at the start of the process, each type $u \in V$ has a non-negative
population {\em mass} associated with it, corresponding to the
fraction of the population that initially has this type.
(Consider, for example, the fraction of the world's population
that belongs to a certain religion or speaks a certain language.)
Time evolves continuously\footnote{Again, refer to
    Section~\ref{sec:modeling-choices} for a discussion.}
and $x_u(t)$ denotes the mass on type $u$ at time $t$.
The full state of the population at time $t$ is thus given
by the {\em mass vector} $x(t)$, the vector of values $x_u(t)$
for all $u \in V$.

We define a continuous-time dynamical system in which
the direction in which the populations move
is determined in terms of the mass vector $x(t)$.
The dynamical system is motivated by imagining that each person
chooses a random other person to interact with.
Selection effects are captured in two ways by the model:
first, people are more likely to interact with their type;
second, they only have the potential to be influenced
when they interact with an individual of their own or a neighboring type.
Specifically, each person is $\w$ times more likely to choose
an interaction partner of their own type than someone of a different type,
for a parameter $\w \geq 1$.
When a person of type $u$ chooses to interact with a person of type $v$,
such that $v \in \influence_u$, with probability $\ip$, he will
switch to type $v$, where $\ip\in (0,1]$ is a fixed parameter.
To express this dynamic numerically, we let
$\normalize_u(t) = \w x_u(t)+\sum_{v \in V \setminus \{u\}}x_v(t)$.
A person of type $u$ chooses to interact with his own type with probability
$\w x_u(t) / \normalize_u(t)$,
and with any other type $v \neq u$ with probability
$x_v(t) / \normalize_u(t)$.
Thus, the fraction of the \emph{entire population} which is
moving from $u$ to $v$ is $\ip \cdot x_u(t) x_v(t)/\normalize_u(t)$.
At the same time as this mass of $\ip \cdot x_u(t) x_v(t)/\normalize_u(t)$ is moving from $u$ to $v$,
a mass of ${\ip \cdot}x_v(t) x_u(t)/\wn_v(t)$ is moving from $v$ to $u$.
These movements partially cancel each other out, and motivate the
following definition of the (directed) \emph{\netflow{}} on the edge $(v,u) \in \infgraph$:
\begin{align}\label{eq:model-flows}
\flow{v}{u}(t) &= \ip \cdot x_v(t)\, x_u(t)\, \left( \frac{1}{\wn_v(t)} - \frac{1}{\wn_u(t)} \right).
\end{align}
The change in mass at a node $u$ can then be written as
\begin{align}\label{eq:def-by-flows}
\dot{x}_u(t) & = \sum_{v\in \influence_u} \flow{v}{u}(t)
   \; = \; \ip \cdot x_u(t)\, \sum_{v\in \influence_u}
       x_v(t)\, \left( \frac{1}{\wn_v(t)} - \frac{1}{\wn_u(t)} \right).
\end{align}

Notice that because the system is characterized by a system of
differential equations and that for all $u$ the derivative of
$x_u(t)$ is finite (by Equation \ref{eq:def-by-flows}), we obtain that $x_u(t)$ is continuous for all
$u$.

%

It is natural to think of the parameter $\ip$ as generally being very small,
since most interactions between people do not lead to a change of type.
However, as it turns out, the value of $\ip$ does not have a major
qualitative effect on our results. This is not surprising, since
introducing $p<1$ (as opposed to $p=1$) just slows down the flow
between any two types by a factor of $1/\ip$. We include $\ip$ in the
model in order to capture the range of possible speeds at which
transitions can happen.  For example, if the types in
our model correspond to dialects of a language, we can choose a small
$\ip$ (since the probability that a person changes his dialect is
very small). However, if instead the types represent opinions in the period
before an election, people may switch much more rapidly, and a larger
$\ip$ is appropriate.

\xhdr{Convergence, Equilibria, and Stability in the Global Model.}
Our first result is that for any influence graph $\infgraph$
and any initial mass vector $x$
 the system converges
to a limit mass vector $x^*$.
We prove this by establishing a system of invariants on the population
masses over time, capturing a certain ``rich-get-richer''
property of the process --- essentially, that the types of large mass
will tend to grow at the expense of the types of small mass.

We next consider the equilibria of this model: we say that
a mass vector $x$ is an equilibrium if it remains
unchanged after one application of the update rule.
It is easy to construct examples of equilibria that are not stable,
in the sense that an arbitrarily small perturbation of the
masses $x_u^*$ can --- after further applications of the update rule ---
push the masses far away from the equilibrium.
Such equilibria are less natural as predicted outcomes of the
cultural dynamics being modeled, since the population would
be unlikely to hold its position near this equilibrium.

To make this statement precise, we use the notion of {\em Lyapunov stability}.
We say that an equilibrium $x^*$ is {\em Lyapunov stable}
if
given any $\eps>0$, there exists a $\delta > 0$ such that if
$||x(t_0)-x^*||_1 < \delta$, then $||x(t)-x^*||_1 < \eps$,
for all $t \geq t_0$.\footnote{
One could ask about stronger notions of stability,
in particular, {\em asymptotic stability},
which requires that there exists a $\delta_1>0$ such that if
$||x(t_0)-x^*||_1 < \delta_1$, then $x(t) \rightarrow x^*$
as $t \rightarrow \infty$.
Asymptotic stability is not a useful definition for our purposes;
for example, if the underlying influence graph $\infgraph$ has no edges,
then any assignment of population masses is an equilibrium, but none
are asymptotically stable, since there is no way for a small
perturbation to converge back to the original state.
On the other hand, all equilibria are Lyapunov-stable in this
simple example.}
For simplicity, we use the $L_1$ norm
throughout. Since our vectors have finite dimensionality, the
  different $L_p$ norms only differ by constant factors, so all stability
  results apply to other norms by scaling $\eps, \delta$ appropriately.

We prove that $x^*$  is a Lyapunov-stable equilibrium if and only if
the set of active types $A(x^*) = \{u : x_u^* > 0\}$
is an independent set in the influence graph $\infgraph$.
The proof is based on the rich-get-richer properties
of the process; these properties are used to show
that after a sufficiently small perturbation to the population masses, the
amount by which any type with positive mass can grow is bounded.

\xhdr{Interpretations of the Basic Results.}
The basic results discussed above establish
a precise sense in which the natural equilibria
tend to break the population into non-interacting islands.

In addition to offering a qualitative statement about fragmentation
of opinions under the proposed stylized model,
the results also suggest a way of reasoning about the
phenomenon by which opinions on different issues
tend to become aligned, with an individual's views on one issue
providing evidence for his or her views on another
\cite{poole-one-dim-opinions,spector-one-dim-debate}.
To take a concrete example that already appears on the 2-dimensional
hypercube (i.e., the 4-node cycle), consider a setting in which
each individual has either a liberal or conservative view on
fiscal issues and either a liberal or conservative view on social issues.
If we assume that people only influence each other when they agree
on at least one of these two categories of issues, then the graph on the set
of types is a 4-node cycle.
Since our results on Lyapunov-stable equilibria indicate that independent sets
are favored as outcomes, we can interpret the conclusion in this example
as predicting that under the proposed model, either the
whole population will converge on a single node (representing a
uniform choice of views), or on an independent set of two nodes, in
which case an individual's opinion on fiscal issues has become
correlated with his or her opinion on social issues.


It is also instructive to compare our results to the main result
of \cite{abrams-strogatz-language} discussed above.
Recall that they consider
the influence graph $\infgraph$ $= K_2$ (two connected nodes),
and they find that the two stable equilibria are
the outcomes in which all the population mass is
gathered at a single node.
The family of dynamical systems they consider strictly subsumes ours
in the special case of a two-node graph, but for the specific system
we study, our results imply that
their basic finding extends to arbitrary graphs:
in any graph, the Lyapunov-stable equilibria correspond to the non-empty
independent sets, just as Abrams and Strogatz showed for the
two-node graph $K_2$.

\xhdr{A Generalization: Limiting both Interaction and Influence.}
We now discuss a natural generalization of the model that is
significantly more challenging to analyze.
In the global model, the members of type $u$
can interact with members of \emph{all} other types,
even though they are influenced only by the types in
$\influence_u$.
However, there are settings in which it is more natural
to assume that the members of a type only ever {\em interact} with members
of a subset of the other types; for example, this may be a reasonable
assumption when types represent different languages.
This is somewhat similar to the approach Centola et
al.~\cite{centola2007homophily} in studying a variant of the Axelrod model.
Under this variant, there is a social network among the agents;
whenever two agents become so different that they cannot influence one
another any more, the tie between them is broken, and new ties are formed.
Centola et al.~\cite{centola2007homophily} use simulations to show
that multi-cultural equilibria form readily and stably under this model.

To capture the idea that some types may simply be too
  different to interact, we now assume that there are two potentially
distinct graphs on the set of types $V$:
the influence graph $\infgraph$ (as before), as well as
an undirected {\em interaction graph} $\intgraph$,
where $\infgraph$ is a subgraph of $\intgraph$.
Rather than interacting with a person chosen from the full population,
a member of type $u$ selects an interaction partner
from the set $\interact_u$ of $u$'s neighbors in $\intgraph$.
It is straightforward to write the new update rule for this
more general dynamical system, by summing over
types in $\interact_u$ instead of $V \setminus \{u\}$.
Specifically, we can now define
\begin{align}
\normalize_u(t) & = {\w x_u(t)+\sum_{v \in \interact_u}x_v(t)}.
\label{eq:generalized-interaction-mass}
\end{align}
With this new definition of $\normalize_u(t)$, the
  definitions of flows and updated masses at nodes from
  Equations~\eqref{eq:model-flows}, \eqref{eq:def-by-flows}
  stay exactly the same.
Hence, the $\wn_{u}(t)$ terms emerge
as crucial quantities that determine the direction of the \netflow;
for that reason, we will call $\wn_u(t)$ the \emph{interaction mass} of
node $u$ at time $t$.


The global model is simply the special case
in which the interaction graph $\intgraph$ is the complete graph.
The name \emph{global model} emphasizes that
each type interacts ``globally,'' with all other
types.\footnote{
There are clearly many other potential generalizations
which could incorporate notions of non-uniform interaction, including
different interaction strengths between different pairs of types.
Such extensions would lead to interesting questions as well.
In the present work, we focus on the generalization with two unweighted graphs
$\intgraph$ and $\infgraph$ because it captures in a direct way
some of the additional complexity that is introduced
by simultaneously modeling limited interaction and influence.}

The behavior of this general model is significantly more complex
than the behavior of the global model; for instance, for arbitrary
$\intgraph$ and $\infgraph$, it is not even clear whether the process
will always converge.
Intuitively, much of the difficulty comes from the fact that when
we consider two neighboring types $u$ and $v$, the sets of types that
they are interacting with, $\interact_u$ and $\interact_v$, can be
quite different, whereas in the global model they are both the full set $V$.
Among other things,
this can lead to violations of the rich-get-richer property
that was so useful for reasoning about the dynamics of the global model.

For the general model, we first establish a necessary condition for
equilibria, as well as sufficient conditions for convergence and stability.
We then focus further on the special case in which
$\intgraph = \infgraph$.
This is in a sense the opposite extreme from the global model;
instead of making $\intgraph$ as large as it can be, we make it
as small as possible subject to the constraint that it contains
$\infgraph$ as a subgraph.
Accordingly, we refer to the case $\intgraph = \infgraph$
as the {\em local model}.
There are many interesting open questions surrounding the
behavior of the local model; we make progress on these through
initial convergence results and the identification of a
large class of equilibria that are Lyapunov-stable for all $\w>1$: non-empty independent sets for
which all nodes in the set are at a mutual distance of at least three.
In fact, this is an ``if and only if'' characterization for an
important class of influence graphs: those whose connected components
are trees or, more generally, bipartite graphs.

An interesting observation is that
the local and global models can have genuinely different
behaviors starting from the same initial conditions:
Figure~\ref{fig:models-example} shows an example of an
initial mass distribution on the three-node path for which
the global model converges to an outcome in which the mass
is divided evenly between the two endpoints, while
the local model converges to the outcome in which
all the mass is on the middle node.

\begin{figure}[htb]
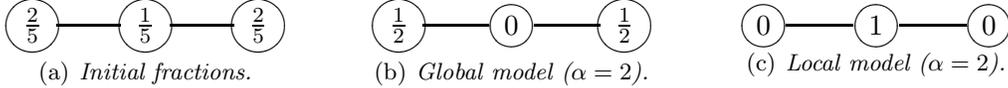

\begin{center}
\subfigure[\emph{Initial fractions.}]{
\xygraph{ !{<0cm,0cm>;<1.5cm,0cm>:<0cm,1.5cm>::} !{(1,0) }*+[o]+[F]{\tfrac{2}{5}}="1" !{(2,0) }*+[o]+[F]{\tfrac{1}{5}}="2" !{(3,0) }*+[o]+[F]{\tfrac{2}{5}}="3" "1"-"2" "2"-"3"}
 \label{fig:init-op}
} \hspace{7mm}
\subfigure[\emph{Global model ($\w=2$).}]{
\xygraph{ !{<0cm,0cm>;<1.5cm,0cm>:<0cm,1.5cm>::} !{(1,0) }*+[o]+[F]{\tfrac{1}{2}}="1" !{(2,0) }*+[o]+[F]{0}="2" !{(3,0) }*+[o]+[F]{\tfrac{1}{2}}="3" "1"-"2" "2"-"3"}
 \label{fig:nash-op}
} \hspace{7mm}
\subfigure[\emph{Local model ($\w=2$).}]{
\xygraph{ !{<0cm,0cm>;<1.5cm,0cm>:<0cm,1.5cm>::} !{(1,0) }*+[o]+[F]{0}="1" !{(2,0) }*+[o]+[F]{1}="2" !{(3,0) }*+[o]+[F]{0}="3" "1"-"2" "2"-"3"}
 \label{fig:optimal-op}
}
\caption{
{\small
An instance in which different models predict convergence to different
equilibria. The global model predicts an outcome in which two
non-interacting types survive (polarization), whereas the local model
predicts that only a single types survives (consensus).
\label{fig:models-example}
}
}
\end{center}
\vspace*{-0.2in}
\end{figure}

At a higher level, formalizing the distinction between
interaction ($\intgraph$) and influence ($\infgraph$) is a potentially
promising activity more broadly, particularly
in light of the considerable recent
interest in the effects of information filtering on the political process.
(See \cite{pariser-filter-bubble,sunstein-republic2} for
popular media accounts, and \cite{bakshy-socnet-info-diffuse}
for recent experimental research.)
The concern expressed in all these lines of work is that
personalization on the Internet makes
it possible to sharply restrict the diversity of information one sees,
and thus risks accentuating the degree of polarization and fragmentation
in political discourse --- essentially, the risk is that
people will only ever be
exposed to those who already agree with them, making any kind of
consensus almost impossible to achieve.

In this context, our general model also brings into the discussion the
interesting contrast between interaction and influence.
Personal filtering of information by Internet users can
restrict the set of people they interact with (affecting the sets
$\interact_u$), and it can also, separately, restrict the set of people
who may be able to influence them (affecting the sets $\influence_u$).
These two different effects are often bundled
together in discussions of information filtering; it will
be interesting to see whether treating them as genuinely distinct
can shed additional light on this set of issues.

\xhdr{Additional related work.}
Steglich et al.~\cite{Steglich:Snijders:Pearson} (see also \cite{Snijders:Steglich:Schweinberger}) provide a general model of social networks that combines selection and influence. While we focus on deriving structural properties of a network, these papers pursue a different goal: statistically valid inference of network parameters from real-life observations. More broadly, inferring latent properties of a social network from observations has been an active line of work. Some of the notable directions in this work, aside from the one taken in \cite{Steglich:Snijders:Pearson,Snijders:Steglich:Schweinberger}, include latent ``social space" reconstruction
 (e.g., \cite{handcock:raftery:tantrum,hoff:raftery:handcock}) and community detection (e.g., see \cite{brandes:erlebach:network-analysis,schaeffer:graphs-clustering-survey,fortunato:community-survey}). 


\section{Observations on the General Model}
\label{sec-general}

In this section, we develop several observations that apply to
the fully general model with an arbitrary interaction graph $\intgraph$.
In the subsequent sections, we utilize these
observations to analyze the global model
(where $\intgraph$ is the complete graph)
and the local model (where $\intgraph = \infgraph$).

We say that a node $u$ is {\em active} at time $t$ if
$x_u(t) > 0$, and {\em inactive} if $x_u(t) = 0$. We
occasionally refer to a node $u$ as \emph{$x$-active} if it is active in $x$ and
\emph{$x$-inactive} otherwise.
The set of all active nodes under $x$ is denoted by $A(x)$.
Much of our analysis concerns the structure of the
subgraph $\activegraph(x)$ of the influence graph $\infgraph$
induced by the active nodes $A(x)$.
We begin by characterizing when mass vectors are
in equilibrium.

\begin{proposition} \label{prop_eq}
A vector $x^*$ is an equilibrium if and only if each connected component
$C$ of $\activegraph(x^*)$ has the property that all nodes $u \in C$
have the same interaction mass $\wn_u$.
\end{proposition}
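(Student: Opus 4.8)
The plan is to recast the equilibrium condition entirely in terms of the net flows and then extract both implications from a single structural observation. By the flow form of the update rule~\eqref{eq:def-by-flows}, a vector $x^*$ is an equilibrium precisely when $\sum_{v \in \influence_u} \flow{v}{u} = 0$ for every node $u$, with each $\flow{v}{u}$ evaluated at $x^*$. The key observation, read directly off~\eqref{eq:model-flows}, is that $\flow{v}{u}$ carries the factor $x_v^*\, x_u^*$: it vanishes unless both $u$ and $v$ are active, and it vanishes whenever $\wn_v = \wn_u$. Consequently every nonzero net flow lives on an edge of $\activegraph(x^*)$, i.e.\ on an edge internal to a single connected component of active nodes.

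The ($\Leftarrow$) direction then falls out immediately. Suppose every component $C$ of $\activegraph(x^*)$ has a common interaction mass across its nodes. Any edge $(v,u) \in \infgraph$ carrying a potentially nonzero flow has both endpoints active and hence lies inside one such $C$, so $\wn_v = \wn_u$ and $\flow{v}{u} = 0$. Every balance sum is therefore zero, so $x^*$ is an equilibrium; inactive nodes need no separate treatment, since all flows touching them already vanish.

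The ($\Rightarrow$) direction is the substantive part, and I would argue it component by component via a discrete maximum principle. For an active node $u$ with at least one active neighbor, dividing its balance condition by $\ip\, x_u^* > 0$ and discarding the inactive neighbors (whose $x_v^*$ vanish) rearranges~\eqref{eq:model-flows} into
\[
\frac{1}{\wn_u} \;=\; \frac{\sum_{v} x_v^* / \wn_v}{\sum_{v} x_v^*},
\]
where the sums range over the active neighbors $v$ of $u$. In other words, $1/\wn_u$ is a convex combination, with strictly positive weights $x_v^*$, of the values $1/\wn_v$ over $u$'s active neighbors --- a discrete harmonic relation.

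Now fix a connected component $C$ of $\activegraph(x^*)$. If $C$ is a single node the claim is vacuous, so assume $|C| \geq 2$; then every node of $C$ has an active neighbor, so the harmonic relation holds throughout $C$. Choosing $u^* \in C$ that maximizes $1/\wn_{u^*}$ (equivalently, minimizes $\wn_{u^*}$) and comparing its value to the convex combination of its neighbors' values forces equality term by term, so every active neighbor of $u^*$ must also attain $1/\wn_v = 1/\wn_{u^*}$. Propagating this equality along edges, using connectivity of $C$, shows $1/\wn_u$ is constant on $C$, i.e.\ all nodes of $C$ share the same interaction mass. The one point that needs care is this propagation --- reapplying the maximizing argument at each newly reached node so that connectivity closes off the whole component --- but this is the standard maximum-principle bookkeeping and presents no genuine obstacle.
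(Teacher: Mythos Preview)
Your proof is correct. The $(\Leftarrow)$ direction matches the paper's argument exactly. For $(\Rightarrow)$, the paper takes a slightly more direct route than you do: it picks a node $u$ of minimum interaction mass in a component $C$ where the masses are not all equal, observes that every flow incident to $u$ is directed away from $u$ (with at least one strictly so, along an edge to $C\setminus\{\text{minimizers}\}$), and concludes that $x_u$ strictly decreases in one step --- hence $x^*$ is not an equilibrium. You instead first derive from the balance condition the averaging identity $1/\wn_u = \sum_v x_v^*\,(1/\wn_v)\big/\sum_v x_v^*$ over active neighbors, and then run a discrete maximum principle to propagate constancy through $C$. Both arguments hinge on the same extremal node, but the paper argues dynamically (the extremal node's mass moves) while you argue statically (the equilibrium forces a harmonic relation, which forces constancy). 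Your framing is a touch more structural and makes the ``rich-get-richer'' mechanism less visible, but it is equally valid and the propagation step you flag as needing care is indeed routine.
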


\begin{proof}
$x^*$ is at equilibrium if and only if the \netflow on all
edges is 0. In turn, from Equation~\eqref{eq:model-flows}, we see that
the \netflow on the edge $(u,v)$ is 0 if and only if at least one of
the following two conditions holds:
(1) $\wn_u = \wn_v$, (2) $x^*_u \cdot x^*_v = 0$.

If $x^*$ satisfies the assumptions, then each edge $(u,v)$ is
either inside a component (and thus $\wn_u = \wn_v$) or has at least
one inactive endpoint (and thus $x^*_u \cdot x^*_v = 0$).
Conversely, if $x^*$ is an equilibrium, each edge satisfies (1) or (2).
When $u,v$ lie in the same component $C$, there is a path between
them in $C$, and along that path, (1) must hold for all edges, so $u$
and $v$ must have the same interaction mass.
\end{proof}

The following useful lemma relates convergence and the change in
directions of \netflows:
\begin{lemma} \label{lem_gen_conv}
If there exists a time $t_0$ such that
the \netflows do not change direction after time $t_0$,
then the system converges.
\end{lemma}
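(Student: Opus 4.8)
The plan is to extract from the fixed-direction hypothesis a bounded, monotone potential whose total change controls the total \netflow across every edge; once every edge carries only finitely much flow in aggregate, each coordinate $x_u(t)$ has absolutely summable increments and hence converges. First I would record the elementary but essential observation that an inactive node stays inactive: every \netflow $\flow{v}{u}(t)$ carries a factor $x_u(t)$, so a node of mass zero receives no inflow and remains at zero. Consequently the active set $A(x(t))$ can only shrink, and being finite it stabilizes to some set $A^*$ after a time $t_1 \ge t_0$. For all $t \ge t_1$ the nodes of $A^*$ are simultaneously active, which is what will let me compare their interaction masses $\wn_u(t)$ at one common time.

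Next I would orient each edge in its now-fixed flow direction, writing $a \to b$ when $\flow{a}{b}(t) \ge 0$ for all $t \ge t_0$ with strict inequality for some such $t$. The key claim is that, restricted to edges carrying positive flow after $t_1$, this orientation is acyclic. Recall that $\flow{a}{b}(t) > 0$ forces $\wn_a(t) < \wn_b(t)$, whereas if both endpoints are active and the flow vanishes then $\wn_a(t) = \wn_b(t)$; hence along any oriented edge inside $A^*$ we have $\wn_a(t) \le \wn_b(t)$ at every $t \ge t_1$. A directed cycle $u_1 \to \cdots \to u_k \to u_1$ would then give $\wn_{u_1}(t) \le \cdots \le \wn_{u_k}(t) \le \wn_{u_1}(t)$, forcing equality throughout and thus zero flow on every cycle edge at all $t \ge t_1$ -- contradicting that each such edge carries positive flow somewhere after $t_1$. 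I expect this to be the main obstacle, precisely because fixing the sign on each edge separately does not by itself produce a consistent global ordering of the nodes; the active-set stabilization is exactly what supplies a single time at which all the interaction-mass comparisons can be chained around the cycle.

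Given acyclicity, I would take a linear extension of the resulting DAG, i.e.\ weights $w_u$ with $w_a < w_b$ whenever $a \to b$ is a positive-flow edge (the values on the remaining, flow-free nodes are irrelevant), and set $\Phi(t) = \sum_u w_u\, x_u(t)$. Regrouping the update rule \eqref{eq:def-by-flows} by edges and using the antisymmetry $\flow{v}{u}(t) = -\flow{u}{v}(t)$, each unordered edge $\{a,b\}$ contributes $(w_b - w_a)\,\flow{a}{b}(t)$, so that for $t \ge t_1$ we obtain $\Phi(t+1) - \Phi(t) = \sum_{a \to b}(w_b - w_a)\,\flow{a}{b}(t) \ge 0$, every surviving term being a positive weight times a nonnegative \netflow. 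Since all masses lie in $[0,1]$ and there are finitely many nodes, $\Phi$ is bounded, hence converges, and its finite total increment bounds $\sum_{t \ge t_1}\flow{a}{b}(t)$ from above on every positive-flow edge; edges that are flow-free after $t_1$ contribute nothing.

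Finally I would close the argument by noting that finite total flow on every edge implies $\sum_{t \ge t_1}\lvert x_u(t+1) - x_u(t)\rvert \le \sum_{t \ge t_1}\sum_{v \in \influence_u}\lvert \flow{v}{u}(t)\rvert < \infty$ for each $u$, so the tail increments of each coordinate are absolutely summable and $x_u(t)$ converges; applying this to every $u$ yields convergence of the mass vector, as required.
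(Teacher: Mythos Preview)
Your argument is correct and rests on the same structural observation as the paper's: once flow directions are fixed, orienting edges by those directions yields an acyclic digraph (because flow always points toward larger interaction mass), and a topological order then furnishes monotone bounded quantities that force convergence. The difference is in the finishing step. You build a single scalar potential $\Phi(t)=\sum_u w_u\,x_u(t)$ from a linear extension, use its boundedness and monotonicity to conclude that every edge carries finite total flow, and hence that each $x_u(t)$ has absolutely summable increments. The paper instead uses the topological order $v_1,\dots,v_n$ to form the $n$ partial sums $X_k(t)=\sum_{i\le k}x_{v_i}(t)$; each $X_k$ is non-increasing (flow crosses the cut $\{v_1,\dots,v_k\}$ only outward) and bounded below, so each $X_k$ converges, and $x_{v_i}(t)=X_i(t)-X_{i-1}(t)$ converges immediately without any summability detour. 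The paper's route is a bit shorter; yours yields the extra quantitative byproduct $\sum_{t}\lvert\flow{a}{b}(t)\rvert<\infty$ on every edge. Two side remarks: your active-set-stabilization preamble is harmless but unnecessary, since an active node cannot become inactive in this model (the term $p\,x_u(t)\cdot \alpha x_u(t)/\wn_u(t)$ alone keeps $x_u(t+1)>0$); and your explicit treatment of zero-flow edges in the acyclicity argument is more careful than the paper's one-line justification.
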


\begin{proof}
Let $G$ be the directed graph obtained by directing each edge $(u,v)$ of
$\infgraph$ according to the direction of the corresponding \netflow
$\flow{u}{v}(t_0)$.
By the assumption, these directions stay constant after time $t_0$.
As flow always goes from types with smaller interaction mass
to types with larger interaction mass, $G$ must be acyclic.
Let $v_1, v_2, \ldots, v_n$ be a topological sorting of the graph,
so that all directed edges of $G$ are of the form $(v_i, v_j), i < j$.

We define $X_k(t) = \sum_{i=1}^k x_{v_i}(t)$ to be the total mass at
time $t \geq t_0$ on the $k$ first nodes in the topological sorting.
Because the total mass in the system is constant,
and all \netflow goes from nodes
with lower indices to nodes with higher indices, each of the
$X_k(t)$ must be non-increasing as a function of $t$. Since they are
also lower-bounded by 0, each $X_k(t)$ must converge to some value
$Z_k$ as $t \to \infty$. Therefore, each $x_{v_i}(t)$ converges to
$Z_i-Z_{i-1}$ as $t \to \infty$.
\end{proof}

Recall that we are interested in characterizing Lyapunov-stable equilibria.
We next provide a sufficient condition.

\begin{proposition} \label{prop_gen_stb}
An equilibrium $x^*$ is Lyapunov-stable if it satisfies the following two properties:
\begin{enumerate}
\item The active nodes form an independent set in the influence graph $\infgraph$.
\item The interaction mass of every active node is strictly greater
  than the interaction mass of each of its inactive neighbors
  in the influence graph $\infgraph$.
\end{enumerate}
\end{proposition}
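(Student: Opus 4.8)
The plan is to exploit the ``rich-get-richer'' mechanism already visible in the \netflow formulation: by~\eqref{eq:model-flows}, flow always runs from the endpoint of smaller interaction mass to the endpoint of larger interaction mass. At $x^*$, the second property says that across every influence edge joining an active node $u \in A(x^*)$ to an inactive neighbor $w$ there is a \emph{strict} gap $\wn_u(x^*) > \wn_w(x^*)$; set $\gamma := \min\,(\wn_u(x^*) - \wn_w(x^*)) > 0$, the minimum over all such edges. Since each $\wn_u(x) = \w x_u + \sum_{v \in \interact_u} x_v$ is linear, hence Lipschitz in $x$ with constant at most $\w$ (in the $L_1$ norm), this gap survives small perturbations: whenever $||x - x^*|| < \gamma/(3\w)$ we still have $\wn_u(x) - \wn_w(x) \geq \gamma/3 > 0$ on every active--inactive influence edge. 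The entire argument will run inside this ball, where the \netflow on each such edge is guaranteed to point from the inactive node into the active node.

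I would track two quantities: the total inactive mass $Y(t) := \sum_{w \notin A(x^*)} x_w(t)$, and the individual active masses $x_u(t)$ for $u \in A(x^*)$. The key one-step invariants, valid whenever $x(t)$ lies in the ball above, are: (i) $Y(t+1) \le Y(t)$, and (ii) $x_u(t+1) \ge x_u(t)$ for every active $u$. For (i), I partition the influence edges incident to inactive nodes into inactive--inactive edges, whose two antisymmetric \netflows cancel and so leave $Y$ unchanged, and active--inactive edges, along which the persisting gap forces the inflow $\flow{u}{w}(t) \le 0$, i.e. the inactive node $w$ loses mass to its active neighbor $u$. For (ii), the first property is essential: since $A(x^*)$ is independent, every influence-neighbor of an active node is inactive, so by~\eqref{eq:def-by-flows} node $u$ receives only inflow, giving $x_u(t+1) - x_u(t) = \sum_{v \in \influence_u} \flow{v}{u}(t) \ge 0$.

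From these invariants and conservation of total mass (the antisymmetry $\flow{v}{u} = -\flow{u}{v}$ makes $\sum_u x_u(t)$ constant) I would derive the Lyapunov bound. Monotonicity gives $Y(t) \le Y(t_0) \le \delta$ and, for each active $u$, $(x_u^* - x_u(t))^+ \le (x_u^* - x_u(t_0))^+$, so the negative active deviations sum to at most $\delta$; conservation then pins the signed active surplus $\sum_{u \in A(x^*)}(x_u(t) - x_u^*)$ to within $O(\delta)$ of $-Y(t)$. Splitting $||x(t) - x^*||$ into its active and inactive parts and combining these facts yields $||x(t) - x^*|| \le C\delta$ for a \emph{universal} constant $C$, independent of the graph and of $\gamma$.

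The remaining, and genuinely delicate, point is a circularity: the invariants (i)--(ii) and the bound $||x(t)-x^*|| \le C\delta$ each presuppose that the gap has not yet closed, i.e. that $x$ has stayed inside the ball of radius $\gamma/(3\w)$ up to the current time --- which is exactly what we are trying to prove. I would break this with a bootstrapping induction on $t$: given $\eps$, choose $\delta$ so small that $C\delta < \min(\eps, \gamma/(3\w))$, and prove by induction that $||x(s) - x^*|| \le C\delta$ for all $t_0 \le s \le t$. The inductive hypothesis keeps the whole history inside the ball, so the gap persists at every past step, so invariants (i)--(ii) apply at each of those steps, so the bound of the previous paragraph yields $||x(t+1) - x^*|| \le C\delta < \gamma/(3\w)$, closing the induction; since $C\delta < \eps$, this is exactly Lyapunov stability. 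I expect the main obstacle to be precisely this coupling between ``staying close'' and ``flows pointing inward'': the argument only goes through because $C$ is uniform, so that a single choice of $\delta$ simultaneously controls $\eps$ and preserves the interaction-mass gap for all time.
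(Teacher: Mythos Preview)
Your proposal is correct and follows essentially the same approach as the paper: both exploit the strict interaction-mass gap on active--inactive edges to show that, within a small enough ball, every active node's mass is non-decreasing (and hence the total inactive mass is non-increasing), then close the inductive/bootstrapping loop. The only difference is in the final bookkeeping: the paper observes that since each $x_u(t)$ for $u\in A$ is non-decreasing, $\sum_{u\in A}|x_u(t)-x_u(t_0)| = \sum_{v\notin A}x_v(t_0) - \sum_{v\notin A}x_v(t)$, which telescopes to give the sharper conclusion $||x(t)-x^*||_1 \le ||x(t_0)-x^*||_1$ (i.e., your $C=1$), making the ``staying in the ball'' issue automatic rather than requiring a separate choice of $\delta$.
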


\begin{proof}
Let $x^*$ be an equilibrium for which both properties hold.
A node $u$ is called \emph{$x^*$-active} if it is active in $x^*$ and
\emph{$x^*$-inactive} otherwise.
Let $A$ be the set of all $x^*$-active nodes, and let
$\wn^*_u$ denote the interaction mass of node $u$ with respect to $x^*$.
Define
\begin{align}
\delta & =  \tfrac{1}{{2\w+1}} \;
            \min_{u \in A,\; v \notin A,\; (u,v) \in \infgraph}\; (\wn^*_u - \wn^*_v)
       \; > \; 0
\label{eqn:delta-definition}
\end{align}
by the second property of $x^*$.
To show stability, we prove that whenever
$||x^*-x(t_0)||_1 \leq \delta$, the system will satisfy
$||x^*-x(t)||_1 \leq \delta$ for all times $t \geq t_0$.

The key step of the proof is to establish that for each node
$u \in A$, the mass $x_u(t)$ is non-decreasing over time,
i.e., that $\dot{x}_u(t) \geq 0$ for all $t \geq t_0$.
The initial condition implies that
$x_u(t_0) \geq x^*_u - \delta$ for all $u \in A$,
$\sum_{u \in A} x_u(t_0) \geq \sum_{u \in A} x^*_u - \delta$ and
$\sum_{v \notin A} x_v(t_0) \leq \delta$.
We will show that these invariants are maintained for all $t \geq t_0$.

Consider any time $t \geq t_0$ and edge $(u,v)$ with $u \in A$ and $v
\notin A$. The invariants imply that
$\wn_u(t) \geq \wn^*_u - \w \delta$ and
$\wn_v(t) \leq \wn^*_v + \w \delta$,
so we obtain that
$$\wn_u(t) - \wn_v(t) \geq (\wn^*_u - \wn^*_v) - 2\w \delta
 \overset{\eqref{eqn:delta-definition}}{\geq}
(2\w+1) \delta - (2\w) \delta > 0.$$
In particular, this implies that $\flow{v}{u}(t) \geq 0$; because this
holds for all $v \notin A$, we have established that
$\dot{x}_u(t) \geq 0$. By summing over all nodes $u \in A$, we have
shown that the invariant continues to hold.

Finally, because each of the $x_u(t), u \in A$ is non-decreasing,
mass can only move among $x^*$-inactive nodes, or from $x^*$-inactive
nodes to $x^*$-active ones. Therefore,
$\sum_{u \in A} |x_u(t)-x_u(t_0)|
= \sum_{u \notin A} x_u(t_0) - \sum_{u \notin A} x_u(t)$.
Thus,
\begin{align*}
{||x(t)-x^*||_1}
 & \leq \sum_{u \in A} |x_u(t)-x_u(t_0)| + \sum_{u\in A} |x_u(t_0)-x^*_u|
       + \sum_{u \notin A} x_u(t) \\
 & = \sum_{u \in A} |x_u(t_0)-x^*_u| + \sum_{u \notin A} x_u(t_0) = ||x(t_0)-x^*||_1 \; \leq \; \delta,
\end{align*}
so the system is Lyapunov-stable.
\end{proof}


\section{The Global Model} \label{sec-global}
In this section, we analyze the global model.
The definition of the general model states
that flows are always directed from nodes with smaller
interaction mass to nodes with larger interaction mass.
For the global model, this property is simplified significantly:
flow is always directed from types with smaller mass to types with
larger mass.
This property lets us achieve an almost complete understanding of the
global model.
We show that for this model, the system always converges, and we
present a complete characterization of which equilibria are Lyapunov-stable.
First, we characterize equilibria by applying
Proposition~\ref{prop_eq} to the global model.

\begin{corollary} \label{cor_equal-mass}
Under the global model with $\w > 1$, the system is at
equilibrium $x^*$ if and only if the following holds:
for every connected component $C$ of $\activegraph(x^*)$,
all nodes $u \in C$ have the same mass.
\end{corollary}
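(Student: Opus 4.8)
The plan is to derive this directly from Proposition~\ref{prop_eq} by computing the interaction mass $\wn_u$ in the special case where $\intgraph$ is the complete graph. First I would write out $\wn_u$ for the global model: since here $\interact_u = V \setminus \{u\}$, we have $\wn_u = \w x_u + \sum_{v \in V \setminus \{u\}} x_v$. Letting $M = \sum_{v \in V} x_v$ denote the total population mass (which the update rule conserves), this collapses to $\wn_u = (\w - 1)\, x_u + M$, an affine function of the single variable $x_u$.

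The key observation is that when $\w > 1$, the map $x_u \mapsto (\w-1)x_u + M$ is strictly increasing, hence injective. Therefore, for any two nodes $u$ and $v$ in the same state vector, $\wn_u = \wn_v$ holds if and only if $x_u = x_v$. In particular, all nodes in a connected component $C$ of $\activegraph(x^*)$ share a common interaction mass precisely when they all share a common mass.

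With this equivalence established, the corollary follows immediately: Proposition~\ref{prop_eq} characterizes equilibria as exactly those $x^*$ for which every connected component of $\activegraph(x^*)$ has all of its nodes at a common interaction mass, and I have just shown this is equivalent, under $\w > 1$, to every such component having all of its nodes at a common mass. So the specialized statement is just a translation of the general criterion through the simplified formula for $\wn_u$.

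The only subtlety — and the reason the hypothesis $\w > 1$ is required — is the strictness of this monotonicity. If instead $\w = 1$, then $\wn_u = M$ for every node regardless of $x_u$, so the interaction masses are trivially all equal and Proposition~\ref{prop_eq} would declare every mass vector an equilibrium; the clean ``equal mass within each component'' description would then break down. Since $\w > 1$ is assumed throughout this section, this degenerate case does not arise, and I expect no real obstacle beyond carefully recording the affine simplification of $\wn_u$ and invoking injectivity.
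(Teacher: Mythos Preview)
Your proposal is correct and follows essentially the same approach as the paper: both apply Proposition~\ref{prop_eq} and observe that in the global model the interaction mass simplifies to $(\w-1)x_u$ plus a constant, so that equal interaction masses within a component is equivalent to equal masses when $\w>1$. The paper phrases this as ``the sum cancels out'' rather than introducing the total mass $M$ explicitly, but the algebra and the logical structure are identical.
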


\begin{proof}
Proposition~\ref{prop_eq} guarantees that $x^*$ is at equilibrium
if and only if for each edge $(u,u')\in \activegraph(x^*)$:
    $\w x^*_u + \sum_{v \in \interact_u} x^*_v
        = \w x^*_{u'} + \sum_{v \in \interact_{u'}} x^*_v$.
In the global model, for any node $u$, the set $\interact_u$ consists
of all types but $u$ itself, implying that the sum cancels out, and we obtain
$(\w-1) x^*_u = (\w-1) x^*_{u'}$.
For $\w > 1$, this implies $x^*_u = x^*_{u'}$.
\end{proof}

We next show that the system always converges;
the proof relies on the key invariant that
for any $1 \leq k \leq n$, the total mass of the $k$
smallest types never increases over time.
More formally, we define the following quantities:

\begin{definition}
Let $y_1(t) \leq y_2(t) \leq \ldots \leq y_n(t)$
be the node masses sorted in non-decreasing order.
Define
\begin{align}
Y_k(t) & = \sum_{i \leq k} y_i(t)
       \; = \; \min_{R: |R| = k} \sum_{v \in R} x_v(t)
\label{eqn:partial-sum}
\end{align}
to be the sum of the masses of the $k$ smallest nodes at time $t$.
\end{definition}

The following lemma formally captures the notion that the rich get
richer in the global model.

\begin{lemma} \label{lem_glob_nondecreasing}
For every $k$, the function $Y_k(t)$ is non-increasing in $t$,
i.e., $\dot{Y}_k(t) \leq 0$.
\end{lemma}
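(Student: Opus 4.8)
The plan is to exploit the variational characterization in \eqref{eqn:partial-sum}, $Y_k(t) = \min_{R:\,|R|=k}\sum_{v\in R}x_v(t)$, which reduces the task to exhibiting a single size-$k$ set whose total mass does not grow. First I fix a time $t$ and let $R$ be a set of $k$ nodes attaining this minimum, i.e. a set consisting of $k$ nodes of smallest mass, so that $\sum_{v\in R}x_v(t)=Y_k(t)$ and $x_v(t)\le x_w(t)$ whenever $v\in R$ and $w\notin R$. Since the same characterization applied at time $t+1$ gives $Y_k(t+1)\le\sum_{v\in R}x_v(t+1)$, it suffices to prove that the total mass carried by this particular set $R$ is non-increasing from $t$ to $t+1$.

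Next I rewrite the change in the mass of $R$ in terms of \netflows using \eqref{eq:def-by-flows}: $\sum_{v\in R}x_v(t+1)-\sum_{v\in R}x_v(t)=\sum_{v\in R}\sum_{w\in\influence_v}\flow{w}{v}(t)$, which is the net \netflow into $R$. Because $\flow{w}{v}(t)=-\flow{v}{w}(t)$ by \eqref{eq:model-flows}, every edge with both endpoints in $R$ contributes a canceling pair and drops out; only edges $(v,w)\in\infgraph$ crossing the boundary, with $v\in R$ and $w\notin R$, survive, so the net \netflow into $R$ equals $\sum \flow{w}{v}(t)$ taken over such boundary edges.

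The crux is then the rich-get-richer property specialized to the global model. Here $\wn_u(t)=(\w-1)x_u(t)+M$, where $M$ is the (constant) total mass, so $\wn_u(t)$ is a non-decreasing function of $x_u(t)$; consequently, on any edge the \netflow runs from the endpoint of smaller mass toward the endpoint of larger mass. For each boundary edge $(v,w)$ with $v\in R$ and $w\notin R$ we have $x_v(t)\le x_w(t)$ by the choice of $R$, hence $\flow{v}{w}(t)\ge 0$ and thus $\flow{w}{v}(t)\le 0$. Summing over boundary edges shows the net \netflow into $R$ is non-positive, giving $\sum_{v\in R}x_v(t+1)\le Y_k(t)$ and therefore $Y_k(t+1)\le Y_k(t)$.

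I expect the only subtle point to be the treatment of ties, since when several nodes share the same mass the set of ``$k$ smallest'' nodes is not unique. This is precisely why the $\min$ formulation in \eqref{eqn:partial-sum} is convenient: I only need one minimizing set $R$, and the single property I extract from it is the weak inequality $x_v(t)\le x_w(t)$ across the boundary, which holds for any such minimizer and is exactly what the flow-direction argument requires. No strict separation between the masses inside and outside $R$ is needed, so ties cause no difficulty.
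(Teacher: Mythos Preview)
Your proof is correct and follows essentially the same approach as the paper: both fix a minimizing set $R$ at time $t$, observe that in the global model $\wn_u(t)=(\w-1)x_u(t)+M$ so that \netflow crosses the boundary of $R$ only outward, and conclude via the variational formula \eqref{eqn:partial-sum} that $Y_k(t+1)\le\sum_{v\in R}x_v(t+1)\le Y_k(t)$. The only cosmetic difference is that the paper phrases this as a proof by contradiction (assuming $Y_k(t+1)>Y_k(t)$ and deriving a boundary edge with the wrong flow direction), whereas you argue directly; your explicit cancellation of internal \netflows and handling of ties are fine and make the argument slightly more self-contained.
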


\begin{proof}
Let $t,k$ be arbitrary.
Consider any set $S$ of $k$ nodes achieving the minimum in
\eqref{eqn:partial-sum} at time $t$; notice that there could be
multiple such sets $S$.
Consider any $u \in S, v \notin S$;
by definition of $S$, we have that $x_u(t) \leq x_v(t)$, and hence
$\flow{u}{v}(t) \geq 0$.
Because this holds for all such edges $(u,v)$, we obtain that the
total weight on nodes of $S$ cannot increase.
As this holds for all candidate sets $S$,
we get that $\dot{Y}_k(t) \leq 0$.
\end{proof}


\begin{theorem} \label{thm_glob_conv}
Under the global model, the system converges for any influence graph and any starting mass vector $x(0)$.
\end{theorem}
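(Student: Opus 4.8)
The plan is to combine the rich-get-richer invariant of Lemma~\ref{lem_glob_nondecreasing} with a second, quadratic potential, and then pin down the limit by a compactness argument. We may assume $\w > 1$, since for $\w = 1$ all interaction masses equal $\sum_v x_v$, every \netflow vanishes, and the system is constant. Lemma~\ref{lem_glob_nondecreasing} gives that each $Y_k(t)$ is non-increasing, and since $Y_k(t) \ge 0$, every $Y_k(t)$ converges as $t \to \infty$. Writing the sorted masses as $y_k(t) = Y_k(t) - Y_{k-1}(t)$, this already shows that the whole sorted profile converges to some limit $y^* = (y_1^*, \ldots, y_n^*)$. The obstacle is that this controls only the \emph{multiset} of masses, not the mass $x_u(t)$ on a fixed node: the assignment of nodes to sorted positions can keep changing, so two adjacent nodes whose masses tend to a common value can swap order infinitely often and flip the direction of the \netflow between them. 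This is exactly where Lemma~\ref{lem_gen_conv} alone does not suffice. If no two nodes share a limiting mass, then all \netflow directions eventually freeze and Lemma~\ref{lem_gen_conv} finishes the proof immediately; the real work is handling ties.

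To handle the general case we first show that the increments $\Delta_u(t) := x_u(t+1) - x_u(t)$ tend to $0$ for every $u$. Consider the potential $\Phi(t) = \sum_u x_u(t)^2$. Using the flow formulation~\eqref{eq:def-by-flows}, one computes
\[
\Phi(t+1) - \Phi(t) = \sum_u \Delta_u(t)^2 + 2 \sum_{(u,v) \in \infgraph} \flow{v}{u}(t)\,\big(x_u(t) - x_v(t)\big).
\]
In the global model the \netflow on every edge runs from the lower-mass to the higher-mass endpoint, so $\flow{v}{u}(t)(x_u(t)-x_v(t)) \ge 0$; hence $\Phi$ is non-decreasing, and being bounded above by $(\sum_u x_u)^2$ it converges. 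Its increments therefore tend to $0$, and since both summands above are non-negative we obtain $\sum_u \Delta_u(t)^2 \to 0$, so $\Delta_u(t) \to 0$ for each $u$.

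Finally we pin down the limit using the set $\Omega$ of limit points of the trajectory $x(t)$. Because total mass is conserved the orbit lies in a compact set, so $\Omega$ is non-empty and compact; because $\|x(t+1) - x(t)\|_1 \to 0$, a standard argument shows $\Omega$ is also connected. Continuity of the update map together with $x(t+1) - x(t) \to 0$ forces every point of $\Omega$ to be a fixed point, i.e.\ an equilibrium, and convergence of the sorted profile forces every point of $\Omega$ to have sorted masses exactly $y^*$. By Corollary~\ref{cor_equal-mass} such an equilibrium is an assignment of the fixed multiset $y^*$ to the nodes in which every connected component of $\activegraph$ is monochromatic, and there are only finitely many assignments of a fixed multiset to $n$ nodes; hence the set of equilibria with sorted profile $y^*$ is finite. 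A connected subset of a finite set is a single point, so $\Omega$ is a singleton and $x(t)$ converges. The only genuinely delicate ingredients are the monotonicity of $\Phi$ and the connectedness of $\Omega$ under vanishing increments; everything else is bookkeeping built on Lemma~\ref{lem_glob_nondecreasing} and Corollary~\ref{cor_equal-mass}.
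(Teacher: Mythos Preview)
Your proof is correct, and it takes a genuinely different route from the paper's. Both arguments start from Lemma~\ref{lem_glob_nondecreasing} to get convergence of the sorted profile $y(t) \to y^*$, but they diverge on the hard step of upgrading this to convergence of $x(t)$ itself. The paper argues directly and combinatorially: once every $y_k(t)$ is within $\delta/(2n)$ of its limit $z_k$, a node cannot move between sorted positions with distinct limit values, because such a jump would require a mass gain exceeding the total mass still available among the smaller positions (contradicting the monotonicity of $Y_{j-1}$). Hence each node is eventually confined to a single ``$z$-block'' and converges. Your argument instead introduces the quadratic potential $\Phi(t)=\sum_u x_u(t)^2$, whose monotonicity in the global model yields $\|x(t+1)-x(t)\|\to 0$; you then observe that the $\omega$-limit set is connected (from vanishing increments) and contained in the finite set of vectors whose sorted profile equals $y^*$, so it is a singleton. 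This is a more dynamical-systems-flavored proof and has the pleasant by-product of establishing $\Delta_u(t)\to 0$ explicitly. Two minor remarks: your appeal to Corollary~\ref{cor_equal-mass} and the fixed-point property of $\Omega$ are unnecessary, since the set of \emph{all} vectors with sorted profile $y^*$ is already finite; and the paper's argument, while slightly more ad hoc, avoids the (standard but nontrivial) connectedness lemma for $\omega$-limit sets under vanishing step size.
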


\begin{proof}
By Lemma \ref{lem_glob_nondecreasing}, each function $Y_j(t)$ is
non-increasing in $t$.
As all masses are non-negative, the $Y_j(t)$ are also bounded
below by $0$.
Hence, each function $Y_j(t)$ must converge to some value $Z_j$.
Thus, each function $y_j(t)$ must converge to $Z_j - Z_{j-1} =: z_j$.
It remains to show that this also implies convergence of $x(t)$.

Let $\delta > 0$ be at most the smallest difference between any two
distinct $z_j$, i.e., $\delta \leq \min_{i,j : z_i \neq z_j} |z_i-z_j|$.
Let $t_0$ be large enough that
$|y_{i}(t) - z_{i}| < \frac{\delta}{3}$ for all $i$ and $t \geq t_0$.

We will show that the only cases in which there could be
nodes $v$ and times $t' > t \geq t_0$ such
that $x_v(t) = y_j(t)$ and $x_v(t') = y_{j'}(t')$ is to have
$z_j=z_{j'}$.
If not, then let $\hat{t}$ be such that $|x_v(t) - z_j| < \delta/3$
for $t < \hat{t}$ arbitrarily close to $\hat{t}$,
and $|x_v(t) - z_{j'}| < \delta/3$
for $t' > \hat{t}$ arbitrarily close to $\hat{t}$.
Because $|z_{j'} - z_j| \geq \delta$, this implies that $x_v(t)$ must
be discontinuous at $t = \hat{t}$, which it cannot be.
\end{proof}

\subsection{Characterization of Lyapunov-Stable Equilibria}
For the global model, the properties required for Proposition
\ref{prop_gen_stb} hold for any independent set, since the interaction
mass of active types is always greater than the interaction
mass of inactive types.
Therefore, any equilibrium in which the set of active nodes is
independent is Lyapunov-stable.
To complete the characterization, we show that the converse is also
true.

\begin{theorem} \label{thm:global-stable}
In the global model with $\alpha > 1$,
an equilibrium $x^*$ is Lyapunov-stable if
and only if the active nodes form an independent set.
\end{theorem}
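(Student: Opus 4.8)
The plan is to prove both directions, but since the ``if'' direction is essentially already disposed of in the paragraph preceding the theorem (for an independent active set, every active node $u$ has interaction mass $(\w-1)x^*_u+S$, where $S$ is the constant total mass, which strictly exceeds the interaction mass $S$ of any inactive neighbor when $\w>1$, so both hypotheses of Proposition~\ref{prop_gen_stb} hold), I would focus entirely on the converse in contrapositive form: if $A(x^*)$ is \emph{not} independent, then $x^*$ is not stable. So suppose $(u,v)$ is an edge of $\infgraph$ with both endpoints active. By Corollary~\ref{cor_equal-mass}, $u$ and $v$ lie in a common component $C$ of $\activegraph(x^*)$ and every node of $C$ carries the same mass $m:=x^*_u>0$. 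I would fix the escape radius $\eps=m/2$ once and for all, and for an arbitrary $\delta>0$ produce a start within $\delta$ of $x^*$ whose trajectory leaves the $\eps$-ball: choose $0<\eta<\delta/2$ and perturb only inside $C$, setting $x_u(t_0)=m+\eta$, $x_v(t_0)=m-\eta$, and $x_w(t_0)=x^*_w$ otherwise, so that $\|x(t_0)-x^*\|_1=2\eta<\delta$.

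The whole argument rests on two structural facts special to the global flow~\eqref{eq:model-flows}, which I would establish first. First, a node of mass $0$ stays at mass $0$ forever, because every flow incident to it is proportional to its own mass; hence the active set can only shrink. Second, distinct components of $\activegraph(x^*)$ share no $\infgraph$-edge, so the only $\infgraph$-neighbors of $C$ lying outside $C$ are inactive, and by the first fact they remain inactive. Together these confine the dynamics: the total mass $\sum_{z\in C}x_z(t)$ stays equal to $|C|\,m$, and the masses inside $C$ interact only through the globally constant total mass $S$ appearing in each $\wn_z$. I would also record the ``rich-get-richer at the top'' fact that $\max_{z\in C}x_z(t)$ is non-decreasing: the node attaining this maximum receives only non-negative flows, since its neighbors inside $C$ are no larger and its neighbors outside $C$ are inactive, so its mass cannot drop.

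With these in hand the limit argument is short. By Theorem~\ref{thm_glob_conv} the trajectory converges to an equilibrium $\hat x$. Since $\max_{z\in C}x_z(t)$ is non-decreasing and starts at $m+\eta$, we get $\max_{z\in C}\hat x_z\ge m+\eta>m$, so $\hat x$ is not the all-$m$ vector on $C$. Now I quantify how far $\hat x$ must be: by Corollary~\ref{cor_equal-mass} each component of $\activegraph(\hat x)$ has constant mass, and if every node of $C$ were active under $\hat x$ then (as $C$ is connected and decoupled) all of $C$ would form one component of total mass $|C|\,m$, forcing every node to mass $m$ and contradicting $\max>m$. Hence some $z\in C$ has $\hat x_z=0$, whence $\|\hat x-x^*\|_1\ge |x^*_z-\hat x_z|=m$. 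Because $x(t)\to\hat x$, for all large $t$ we have $\|x(t)-\hat x\|_1<m/2$, so $\|x(t)-x^*\|_1> m-\tfrac{m}{2}=\eps$; the trajectory escapes. As $\delta$ was arbitrary while $\eps=m/2$ is fixed, $x^*$ is not stable, completing the characterization.

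I expect the main obstacle to be precisely the decoupling step. The tempting shortcut --- perturb so that the perturbed node becomes the global maximum or minimum and ride its monotonicity --- fails in general, because the edge $(u,v)$ may sit in a component whose common mass is neither the smallest nor the largest in the graph, so one cannot make the perturbed node extremal globally. The two observations above (mass-$0$ nodes are frozen, and active components share no edges) are exactly what make the perturbation inside $C$ impossible to ``absorb'' by the rest of the graph, reducing everything to a closed connected subsystem; verifying them carefully from~\eqref{eq:model-flows} is the crux, after which the monotonicity of $\max_{z\in C}x_z$ together with Corollary~\ref{cor_equal-mass} forces a macroscopic (distance $\ge m$) separation rather than a merely $O(\eta)$ one.
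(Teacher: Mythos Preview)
Your proof is correct and follows essentially the same route as the paper's: isolate a nontrivial active component $C$, observe that it evolves in isolation (since its only $\infgraph$-neighbors outside $C$ are inactive and stay inactive), perturb two adjacent nodes up/down, invoke convergence, and use Corollary~\ref{cor_equal-mass} plus mass conservation to force some node of $C$ to become inactive in the limit. The one cosmetic difference is that you track $\max_{z\in C}x_z(t)$ as non-decreasing, whereas the paper tracks the minimum mass in $C$ as non-increasing via Lemma~\ref{lem_glob_nondecreasing}; these are dual statements and lead to the same contradiction. Your treatment of the decoupling (mass-$0$ nodes are frozen, active components share no $\infgraph$-edge) is more explicit than the paper's one-line ``each component of $\activegraph(x^*)$ evolves in isolation,'' but the content is identical.
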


\begin{proof}
It remains to prove the ``only if'' direction. Assume that the active
nodes in an equilibrium $x^*$ do not form an independent set.
We will prove that $x^*$ is not Lyapunov-stable.

Let $C$ be a connected component of size $|C| \geq 2$
in $\activegraph(x^*)$. By the assumption that the active nodes in $x^*$
do not form an independent set, such a connected component exists.
Notice that each component of
$\activegraph(x^*)$ evolves in isolation, so we can focus on
only $C$ for the rest of the proof.
Therefore, by Corollary \ref{cor_equal-mass}, $x^*_v = \mu$
for all $v \in C$, for some value $\mu$.

Let $u,v \in C$ be two arbitrary nodes, and $\delta>0$ be
arbitrarily small.
Consider the following perturbation:
$x_u = x^*_u + \delta, x_v = x_v^* - \delta$, and $x_w = x^*_w$ for all
$w \neq u,v$.
By Theorem \ref{thm_glob_conv}, the system, starting from the perturbed
vector $x$, will converge to some new equilibrium $y$.
By Lemma \ref{lem_glob_nondecreasing}, the smallest mass of any
node in $C$ will always be at most $\mu - \delta$
during the process.
All $y$-active nodes must have the same mass;
therefore, if all nodes were active in $y$, they would
all have to have mass at most $\mu - \delta$,
which would imply that mass has disappeared from $C$, a
contradiction.
Hence, at least one node of $C$ must end up inactive in $y$.
In particular, this means that $||x^*-y||_1$ is not bounded
in terms of $\delta$, and $x^*$ is not Lyapunov-stable.
%
\end{proof}


\section{The Local Model} \label{sec-local}

In the previous section, we have given essentially complete
characterizations of convergence and stability of equilibria
under the global model, in which all types have the potential to
interact, even though only certain pairs of types can influence each
other (according to the graph $\infgraph$).

We now consider the local model, which is at the other extreme
of our general family: here, the interaction graph
$\intgraph$ is the same as the influence graph $\infgraph$;
hence, interactions occur only between individuals who also have the
potential to influence each other.
(We will generally denote this underlying graph by $\infgraph$,
with the understanding that $\intgraph = \infgraph$.)
We find that the problems of convergence and
stability are much more challenging in this case.
For the global model, we were able to
extract very useful organizing structures in the dynamical system
that gave us a natural progress measure toward convergence.
But as is well known, in general, a dynamical system on even
a small number of variables may have convergence properties
that are extremely difficult to analyze or express.
For example, not only does Lemma~\ref{lem_glob_nondecreasing}
  not hold for the node masses; a reformulation for \emph{interaction
    masses} does not hold, either.

Given the complex behavior of the update rules for the local model,
we find that the convergence and stability questions are already
difficult on graphs $\infgraph$ with a small number of nodes, and
we focus our results here on such cases.
(Of course, based on the motivating premise of the model, even
systems with a small number of variables are frequently natural,
corresponding to selection and influence dynamics in societies with,
for example, a small number of languages, a small number of
political parties, or a small number of dominant religions or cultures.)

We begin by considering the case $\w > 1$ and first prove the following theorem:
\begin{theorem}
Under the local model, if the influence graph is a $3$-path,
then the system converges from any starting state.
\label{thm-loc-3-w}
\end{theorem}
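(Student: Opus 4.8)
The plan is to invoke Lemma~\ref{lem_gen_conv}: it suffices to show that after finitely many steps the two \netflows (on the two edges of the path) never again change direction. Label the path $1-2-3$ with $2$ the center, and write out the interaction masses for the local model, using $x_1+x_2+x_3=1$:
\[
\wn_1 = \w x_1 + x_2, \qquad \wn_2 = (\w-1)x_2 + 1, \qquad \wn_3 = \w x_3 + x_2 .
\]
Two features drive the whole argument: $\wn_2$ depends only on $x_2$ (and increases in it, since $\w>1$), and $\wn_1-\wn_3 = \w(x_1-x_3)$, so the ordering of the endpoints' interaction masses is governed entirely by their own masses.

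First I would reduce to the case $x_1\ge x_3$. A direct manipulation of \eqref{eq:def-by-flows} gives
\[
x_1(t+1)-x_3(t+1) = \big(x_1(t)-x_3(t)\big)\Big(1 - \frac{\ip\,x_2\,\Phi}{\wn_1\wn_2\wn_3}\Big), \qquad \Phi = x_2\wn_2 - \wn_1\wn_3 .
\]
The multiplier is strictly positive: if $\Phi\le 0$ it is $\ge 1$, and if $\Phi>0$ one checks $\wn_1\wn_2\wn_3 - \ip x_2\Phi = \wn_2(\wn_1\wn_3 - \ip x_2^2) + \ip x_2\wn_1\wn_3 > 0$, using $\wn_1,\wn_3\ge x_2$ and $\ip\le 1$. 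Hence $x_1(t)-x_3(t)$ never changes sign, and by the symmetry of the path I may assume $x_1(t)\ge x_3(t)$, i.e.\ $\wn_1\ge\wn_3$, for all $t$. This already rules out the configuration in which the \netflow runs $1\to2\to3$, leaving only three possibilities: the center is a sink ($1\to2$, $3\to2$), a source ($2\to1$, $2\to3$), or a pass-through carrying $3\to2\to1$.

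The heart of the proof is to show these three configurations cannot alternate infinitely often. Set $a=\wn_1-\wn_2$ and $b=\wn_2-\wn_3$; the configurations are exactly the sign patterns $(a<0,b>0)$, $(a>0,b<0)$, $(a>0,b>0)$, while $\wn_1\ge\wn_3$ gives the constraint $a+b=\w(x_1-x_3)\ge 0$ (which is what excludes $(a<0,b<0)$). Expressing the one-step increments through the flows yields
\[
\Delta a = -2(\w-1)\flow{1}{2} + (\w-2)\flow{2}{3}, \qquad \Delta b = (\w-2)\flow{1}{2} - 2(\w-1)\flow{2}{3},
\]
and since $\flow{1}{2}$ and $\flow{2}{3}$ carry the signs of $-a$ and $-b$, each increment has a determined sign once the configuration and the sign of $\w-2$ are known. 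I expect the cleanest route is a short case split. For $\w\le 2$ the affine identity $a-b=(\w-2)+(4-3\w)x_2\le 0$ (valid for $0\le x_2\le 1$) rules out the source configuration entirely; one then reads off $\Delta a,\Delta b>0$ in the pass-through, so it is absorbing, and the sink can only transition into it. For $\w>2$ the roles flip: in the sink $\Delta a<0,\Delta b>0$ and in the source $\Delta a>0,\Delta b<0$, so both are absorbing, while the pass-through is the only transient configuration and exits at most once into one of them. In every case the \netflow directions change at most once before stabilizing, and Lemma~\ref{lem_gen_conv} then yields convergence.

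The main obstacle is precisely this last step: unlike the global model there is no population-wide rich-get-richer monotone, and for $\w>2$ the pass-through is genuinely not self-reinforcing, so one must argue that once it is left it is never re-entered — which is exactly what the absorbing property of the sink and source supplies. Two bookkeeping points keep the argument honest. First, every \netflow is proportional to $x_u x_v$, so a node at mass $0$ stays at $0$; the active set only shrinks, and any boundary case collapses to a two-node path, which converges trivially. Second, the only way a sign flip could escape the dichotomy above is the simultaneous vanishing $a=b=0$, which corresponds to the interior equilibrium $\wn_1=\wn_2=\wn_3$; this requires $x_2=\tfrac{\w-2}{\w-1}x_1$ and hence does not occur among strictly positive masses when $\w\le 2$, so no degenerate double crossing arises there.
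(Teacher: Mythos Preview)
Your proof is correct and the overall strategy matches the paper's: a configuration case analysis showing the \netflow directions change at most once, then Lemma~\ref{lem_gen_conv}. The paper also splits at $\w=2$, and your increment formulas for $\Delta a,\Delta b$ are exactly its recursive equations for the $\wn_i$ repackaged. There are two genuine differences in execution worth noting. First, you add an initial reduction the paper does not make: the multiplier computation showing $x_1-x_3$ keeps its sign, which lets you discard one of the two pass-through orientations up front; the paper instead handles both orientations simultaneously by observing directly that in a pass-through the larger endpoint's interaction mass can only grow and the smaller's can only shrink. Second, for $\w<2$ you rule out the source configuration via the elementary affine identity $a-b=(\w-2)+(4-3\w)x_2\le 0$, whereas the paper obtains this as a special case of a more general star-graph lemma (at most $\lfloor\w\rfloor$ edges can carry flow away from the center). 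Your route is thus more self-contained on the 3-path; the paper's buys a convergence result for all stars when $\w<2$. For $\w\ge 2$ the two arguments coincide: sink and source are absorbing, and the pass-through can leave at most once.
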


The full proof is provided in Appendix~\ref{app:3conv}, but we
provide a brief outline here.
The subtle difficulty arises due to the fact that the flow
between two types $u$ and $v$ does not necessarily go in the same
direction at all times, but instead may change its direction.
To keep track of the changes in direction, we define a
{\em configuration} of the system to be a labeling of all edges
$(u,v)$ in $\infgraph$ by the direction along which flow is traveling
(i.e., whether from $u$ to $v$ or from $v$ to $u$).
In the case of a 3-node path, there are four possible configurations.
We study transitions among the configurations as the system evolves
over time; we show that each configuration is either a {\em sink},
which cannot transition to any other configuration, or it has
the property that any change in the direction of an edge leads to
a sink configuration.  This ensures that there can be at most
one change in the direction of flow as the system evolves; hence, there
is a time $t_0$ such that for any $t>t_0$, no flow changes its
direction. After this point, Lemma~\ref{lem_gen_conv} guarantees
that the system converges.
For the case $\w\geq 2$, we show this fact only for the
3-path; for $\w<2$, we establish a more general result, showing the
same fact for arbitrary star graphs.

For $\w = 1$, we are able to prove convergence if the active subgraph
is a path of $n\leq 5$ nodes.
The proof requires different techniques than the ones we use for
$\w>1$: for paths of more than 3 nodes, flows on edges can
change their direction infinitely often.
The proof is provided in Appendix~\ref{app:5conv}. 


\subsection{Characterization of Universally Stable Equilibria}

Next, we turn our attention to Lyapunov-stable equilibria.
We focus on a very strong notion of stability: stability of
  an equilibrium $x$ simultaneously for all
$\w>1$.\footnote{%
Contrast this with the notion of stability used in the previous section ---
  there, we characterized Lyapunov-stable equilibria for any given fixed $\w$.}
Formally, we call a mass vector $x$ a
\emph{universally stable equilibrium} if $x$ is a Lyapunov-stable
equilibrium for every $\w > 1$.
Our goal here is to investigate which equilibria are universally
stable. Such equilibria are robust to (a very idealized notion of) a
change in the environment, as expressed by varying $\w$.

Our main result for universally stable equilibria is a complete
characterization for influence graphs that are forests, and
more generally for influence graphs whose connected components are
bipartite graphs.

\begin{theorem}\label{thm:local-univ-stability}
Assume that all connected components of the influence graph
$\infgraph$ are bipartite graphs.
Then, a mass vector $x^*$ is a universally stable equilibrium
under the local model if and
only if the distance between any two active nodes in $x^*$ is at least
$3$.
\end{theorem}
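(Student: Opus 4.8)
The plan is to prove the two implications separately. The forward (``if'') direction is a direct application of the sufficient condition already available, and requires no bipartiteness. If every pair of active nodes is at distance at least $3$, then $A(x^*)$ is in particular an independent set, so each component of $\activegraph(x^*)$ is a single vertex and Proposition~\ref{prop_eq} makes $x^*$ an equilibrium for every $\w$. To invoke Proposition~\ref{prop_gen_stb} I only need its second property: for an active $u$ all neighbors are inactive, so $\wn_u^* = \w x_u^*$, while any inactive neighbor $v$ of $u$ has at most one active neighbor (a second one would place two active nodes at distance $2$), whence $\wn_v^* = x_u^* < \w x_u^* = \wn_u^*$ since $\w>1$. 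Both hypotheses of Proposition~\ref{prop_gen_stb} thus hold for every $\w>1$, giving universal stability.

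For the converse I argue the contrapositive: assuming two active nodes at distance at most $2$, I exhibit some $\w>1$ at which $x^*$ fails to be Lyapunov stable. The key preliminary observation is that universal stability forces $x^*$ to be an equilibrium for \emph{every} $\w$; since the interaction mass $\wn_u = \w x_u + \sum_{v \in \influence_u} x_v$ is affine in $\w$, equality of interaction masses across a component $C$ of $\activegraph(x^*)$ for all $\w$ forces both $x_u = x_{u'}$ and equal active-neighbor mass-sums for $u,u' \in C$. Hence each active component is regular with a common vertex mass $c$. The hypothesis ``distance at most $2$'' then splits into two cases: (i) some component contains an edge, i.e.\ there are adjacent active nodes; or (ii) $A(x^*)$ is independent but two active vertices share a common (necessarily inactive) neighbor $v$.

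In case~(i) I exploit that a connected regular bipartite graph (which $C$ is, being a subgraph of the bipartite $\infgraph$) has equal sides $L,R$ of common degree $d$. Perturbing by $+\delta$ on all of $L$ and $-\delta$ on all of $R$ conserves total mass, keeps the inactive nodes at $0$, and yields $\wn_u - \wn_v = 2(\w-d)\delta$ for every edge $uv$ with $u\in L,\ v\in R$. Choosing any admissible $\w \ne d$ orients every \netflow from one side to the other; Lemma~\ref{lem_gen_conv} then yields convergence to an equilibrium in which the two sides carry strictly different mass, an outcome bounded away from $x^*$. In case~(ii) I instead take $\w$ close to $1$; then $\wn_v \ge x_u + x_w > \w x_u = \wn_u$ (and symmetrically for the other active endpoint), so perturbing a small mass $\delta$ onto $v$ directs flow into $v$. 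Arguing that this orientation persists, Lemma~\ref{lem_gen_conv} gives convergence to an equilibrium in which $v$ is active, and matching the interaction masses at that limit (or $v$ absorbing a neighbor outright) forces $x_v$ up to $\Theta(c)$, independent of $\delta$.

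The main obstacle is precisely the last assertion in each case: upgrading a first-order (linearized) instability into genuine Lyapunov instability, i.e.\ showing the growing mode does not merely \emph{begin} to grow but reaches a size bounded below independently of the perturbation $\delta$. The danger is that the \netflow orientations reverse and the excursion is reabsorbed toward $x^*$. This is exactly where bipartiteness is essential: the two-coloring both supplies the clean balanced perturbation of case~(i) and, more importantly, lets me control the signs of the interaction-mass differences along edges for all time, so that Lemma~\ref{lem_gen_conv} applies and pins down a limit equilibrium at constant distance from $x^*$. I expect establishing this persistence of flow orientation (equivalently, monotone growth of the escaping mass to a constant) to be the delicate, computation-heavy heart of the argument.
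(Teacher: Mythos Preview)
Your plan is essentially the paper's own proof: the ``if'' direction is Proposition~\ref{prop:3-separated}, and for ``only if'' the paper likewise first derives regularity and uniform mass on each nontrivial active component (Lemma~\ref{lem:uniform}(a), which is exactly your affine-in-$\w$ observation), then shows such a component cannot be bipartite via a balanced $\pm\delta$ perturbation on the two sides (Lemma~\ref{lem:uniform}(b), your case~(i)), and separately handles the distance-$2$ independent-set situation (Proposition~\ref{prop:loc-dist3}, your case~(ii)).

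One correction and one reassurance. In case~(i), ``any admissible $\w\ne d$'' is wrong: for $1<\w<d$ the sign of $\wn_u-\wn_v=2(\w-d)\delta$ drives the perturbation \emph{back} toward $x^*$, so that choice does not witness instability; you must take $\w>d$ (the paper uses $\w=d+1$). And the persistence you flag as the ``computation-heavy heart'' is in fact a one-line symmetry: since every $L$-vertex has exactly $d$ neighbors, all in $R$, and the state is uniform on each side, every $L$-vertex has the same interaction mass and receives the same total inflow, so after one step the state is again of the form $(c+\delta',\,c-\delta')$ on $(L,R)$ with $\delta'>\delta$; flows therefore never reverse and Lemma~\ref{lem_gen_conv} applies, forcing one side to $0$. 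The analogous persistence in case~(ii) is equally light once you note the perturbation is confined to $\{v\}\cup A_v$ (all other nodes remain at $0$) and check that $\wn_v>\wn_a$ holds for every $x_v$ below an explicit constant threshold (the paper takes $\w=1+\eta^2$ with $\eta=\min_{a\in A_v} x^*_a$ and gets the threshold $s-\mu-\eta^2$).
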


The proof of Theorem~\ref{thm:local-univ-stability} consists of
several sub-results, all of which hold for arbitrary influence graphs,
and imply the desired characterization under the assumption in the
theorem. It is worth noting that these sub-results constitute
significant progress towards understanding the structure of
universally stable equilibria for arbitrary influence graphs, as we
discuss later.
For brevity, if $x$ is a mass vector such that the distance between
any two active nodes is at least 3, we will say that $x$ is
\emph{3-separated}.

The first proposition proves the ``if'' direction of
Theorem~\ref{thm:local-univ-stability}.
Its proof follows from our analysis in Section~\ref{sec-general}.

\begin{proposition}\label{prop:3-separated}
Under the local model with $\w>1$, any 3-separated mass vector $x^*$
is a Lyapunov-stable equilibrium.
\end{proposition}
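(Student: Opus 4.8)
The plan is to derive the proposition directly from the sufficient condition for stability in Proposition~\ref{prop_gen_stb}; the only real work is to verify its two hypotheses for a $3$-separated mass vector $x^*$ in the local model, where $\interact_u = \influence_u$ and hence $\wn_u = \w x_u + \sum_{v \in \influence_u} x_v$. The first hypothesis is immediate: if the pairwise distance between active nodes is at least $3$, then in particular no two active nodes are adjacent, so $A(x^*)$ is an independent set in $\infgraph$. This also means every connected component of $\activegraph(x^*)$ is a single node, which trivially satisfies the equilibrium criterion of Proposition~\ref{prop_eq}; hence $x^*$ is an equilibrium to begin with, which is what lets us invoke Proposition~\ref{prop_gen_stb} at all.

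The substance is the second hypothesis: every active node must have strictly larger interaction mass than each of its inactive neighbors. First I would compute the interaction mass of an active node $u$. Since $A(x^*)$ is independent, all neighbors of $u$ are inactive, the neighbor sum vanishes, and $\wn^*_u = \w x^*_u$. Next I would fix an inactive neighbor $w$ of $u$ and compute $\wn^*_w = \sum_{v \in \influence_w} x^*_v$ (the self-term vanishes since $x^*_w = 0$). The key observation is that $u$ is the \emph{only} active neighbor of $w$: any other active neighbor $v \neq u$ of $w$ would lie at distance at most $2$ from $u$ (along the path through $w$), contradicting $3$-separation. Hence $\wn^*_w = x^*_u$, and
\[
\wn^*_u - \wn^*_w = \w x^*_u - x^*_u = (\w-1)\,x^*_u > 0,
\]
using $\w > 1$ and $x^*_u > 0$. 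This establishes the second hypothesis, and Proposition~\ref{prop_gen_stb} then yields Lyapunov stability.

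The one place that genuinely requires care --- and the step I would flag as the crux --- is the claim that an inactive neighbor $w$ of an active node $u$ has no active neighbor other than $u$ itself. This is exactly the point at which the distance-$3$ assumption is used and cannot be weakened: if two active nodes were permitted at distance $2$, then such a $w$ could accumulate mass from two active nodes and its interaction mass could match or exceed $\w x^*_u$, breaking the strict inequality and hence the stability argument. Everything else is a short substitution into the definitions of the local model, so once this structural fact is pinned down the proof is complete. Moreover, since the computation holds verbatim for every $\w > 1$, the same argument simultaneously gives the universal stability needed for the ``if'' direction of Theorem~\ref{thm:local-univ-stability}.
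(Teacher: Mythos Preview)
Your proposal is correct and follows essentially the same route as the paper: both verify that a $3$-separated $x^*$ is an equilibrium via Proposition~\ref{prop_eq} and then check the two hypotheses of Proposition~\ref{prop_gen_stb}, with the crux being that any inactive neighbor of an active node has exactly one active neighbor (so its interaction mass is $x^*_u < \w x^*_u$). Your write-up simply makes the interaction-mass computation explicit where the paper leaves it implicit.
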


\begin{proof}
$x^*$ is an equilibrium by Proposition~\ref{prop_eq} since its active
nodes form an independent set.
By Proposition~\ref{prop_gen_stb}, an equilibrium whose active nodes
form an independent set is Lyapunov-stable if the interaction mass of each
active node is greater than the interaction mass of each of its neighbors.
For $\w>1$, this property holds when each inactive node has at most
one active neighbor.
In turn, this holds if and only if the distance between
every two active nodes is at least $3$;
thus, all such equilibria are Lyapunov-stable for every $\w > 1$.
\end{proof}

The ``only if'' direction of Theorem~\ref{thm:local-univ-stability} is
more complicated to prove. First, we show that if the active nodes of
an equilibrium do form an independent set, then being 3-separated is
necessary to ensure universal stability.

\begin{proposition} \label{prop:loc-dist3}
Let $x^*$ be a universally stable equilibrium,
and assume that the active nodes under $x^*$ form an independent set.
Then, $x^*$ is 3-separated.
\end{proposition}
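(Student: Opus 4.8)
The plan is to prove the contrapositive by a first‑order instability analysis localized at a single inactive node. Suppose $x^*$ has an independent active set but is \emph{not} $3$‑separated. Since the active set is independent, no two active nodes are adjacent, so a pair of active nodes at distance $<3$ must in fact be at distance exactly $2$, joined through a common neighbor $v$; this $v$ is forced to be inactive, as an active common neighbor would be adjacent to an active node and violate independence. Hence there is an inactive node $v$ whose set of active neighbors has size $k\ge 2$. Because the active nodes form an independent set, each connected component of $\activegraph(x^*)$ is a single node, so Proposition~\ref{prop_eq} makes $x^*$ an equilibrium for \emph{every} $\w>1$; it therefore suffices to exhibit one value $\w>1$ for which $x^*$ fails to be Lyapunov stable.

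The heart of the argument is to compute the rate at which a small mass planted on $v$ is amplified in one step. Since every neighbor of an active node is inactive, each active neighbor $v''$ of $v$ has interaction mass $\wn^*_{v''}=\w\,x^*_{v''}$ at $x^*$, while $\wn^*_v=\sum_{v''\in\influence_v}x^*_{v''}$ (recall $\interact_v=\influence_v$ in the local model) equals the total mass on the active neighbors of $v$. Using the flow form \eqref{eq:model-flows}--\eqref{eq:def-by-flows}, a perturbation placing mass $x_v$ on $v$ (all other coordinates at $x^*$) changes $x_v$ in one step by $\sum_{v''\in\influence_v}\flow{v''}{v}=\ip\,x_v\sum_{v''}x^*_{v''}\bigl(\tfrac{1}{\wn^*_{v''}}-\tfrac{1}{\wn^*_v}\bigr)+O(x_v^2)$. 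Each active neighbor contributes $x^*_{v''}\bigl(\tfrac{1}{\w x^*_{v''}}-\tfrac{1}{\wn^*_v}\bigr)=\tfrac1\w-\tfrac{x^*_{v''}}{\wn^*_v}$, and summing over the $k$ active neighbors (inactive ones carry no mass and contribute nothing) telescopes to $\tfrac{k}{\w}-1$, because $\sum_{v''}x^*_{v''}=\wn^*_v$. Thus to first order $x_v$ is multiplied each step by $\lambda_v=1+\ip\bigl(\tfrac{k}{\w}-1\bigr)$. Since $k\ge 2$, any $\w\in(1,2)$ gives $\lambda_v>1$. The subtlety worth flagging is that individual active neighbors may actually \emph{drain} $v$ (when one neighbor's mass dominates), yet the \emph{net} first‑order flow into $v$ is strictly positive.

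The remaining and most delicate step is to upgrade this first‑order growth into genuine Lyapunov instability in the full nonlinear system at a boundary fixed point. I would fix $\w\in(1,2)$ and a small radius $\eps_0>0$, and perturb $x^*$ by moving mass $\delta'$ from one active neighbor onto $v$, which keeps the vector feasible with $\|x(0)-x^*\|_1=2\delta'$. The argument is a dichotomy: either the trajectory leaves the $\eps_0$‑ball about $x^*$ (which already proves instability, since $\eps_0$ is independent of $\delta'$), or it stays, in which case I claim $x_v$ grows geometrically and must therefore escape. Inside the ball, continuity keeps each active neighbor's interaction mass within $O(\eps_0)$ of $\w x_{v''}$ (its own neighbors stay near mass $0$) and keeps $\wn_v$ near $\wn^*_v$, so the active‑neighbor part of $\sum_{v''}\flow{v''}{v}$ stays within $O(\eps_0)$ of $\ip\,x_v(\tfrac{k}{\w}-1)$. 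The inactive neighbors are the term requiring care: a high‑interaction‑mass inactive neighbor could drain $v$, but using $x_{v'}\le\wn_{v'}$ (hence $x_{v'}/\wn_{v'}\le 1$) one checks their net contribution to the increment of $x_v$ is at worst $-O(\eps_0)\,x_v$, while positive contributions only help. Hence $x_v(t{+}1)\ge x_v(t)\bigl(1+\ip(\tfrac{k}{\w}-1)-O(\eps_0)\bigr)$, a factor bounded strictly above $1$ for $\eps_0$ small, so $x_v$ — and therefore $\|x(t)-x^*\|_1$ — cannot stay below $\eps_0$. Either way $x^*$ is not Lyapunov stable for this $\w$, contradicting universal stability and completing the proof.
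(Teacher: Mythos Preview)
Your proof is correct, but it takes a different route from the paper's. The paper exploits the freedom in choosing $\w$ more aggressively: it sets $\w=1+\eta^2$ with $\eta=\min_{v''\in A_v}x^*_{v''}$, small enough that the interaction mass of the inactive node $v$ (which is essentially $\sum_{v''\in A_v}x^*_{v''}$) exceeds the interaction mass of \emph{every} active neighbor $v''$ (which is essentially $\w\,x^*_{v''}$) outright. So instead of summing signed first-order contributions and bounding error terms, the paper maintains the exact invariant $\wn_v>\wn_{v''}$ for all $v''\in A_v$ along a one-parameter family of perturbations; every edge carries flow into $v$, and $x_v$ monotonically climbs until it reaches a fixed positive threshold independent of the initial $\delta$. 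No linearization, no continuity dichotomy, and no worry about inactive neighbors draining $v$ (the perturbation touches only $v$ and $A_v$, so all other inactive nodes stay at mass~$0$ forever). Your approach, by contrast, works for any $\w\in(1,2)$ and is the standard ``unstable eigendirection plus uniform lower bound in a ball'' template; it is more machinery but also more portable. Note also that for your specific perturbation the ``inactive neighbors'' term you single out is in fact identically zero, so your argument is cleaner than you present it.
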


\begin{proof}
For the sake of contradiction, suppose that $x^*$ is not 3-separated. Then there exists an inactive node $u$ with at least two active neighbors. Let $A_u$ be the set of all active neighbors of node $u$. We use the following notation:
$$ \textstyle
s=\sum_{v \in A_u} x^*_v; \quad
\eta = \min_{v \in A_u} x^*_v; \quad
\mu = \max_{v \in A_u} x^*_v.
$$
Define $\w = 1+ \eta^2$. We will show that $x^*$ is not Lyapunov-stable for this $\w$.

To prove instability, consider a perturbation $x$ which coincides with
$x^*$ on all nodes not in $\{u\} \cup A_u$, and satisfies
\begin{align}\label{eq:crazy-assumptions}
\begin{cases}
 x_v \leq x^*_v  \quad\text{for all}\; v \in A_u \\
 x_u =\delta \quad\text{for some}\;  \delta\in (0, s - \mu-\eta^2) \\
x_u + \sum_{v \in A_u} x_v = \sum_{v \in A_u} x^*_v = s.
\end{cases}
\end{align}
(Note that $s - \mu-\eta^2>0$ because $A_u$ consists of at least two nodes.)

Under such an $x$, the interaction mass of node $u$ is
\[
\wn_u \; = \; \w x_u + \sum_{v \in A_u} x_v
\; = \; \eta^2 \cdot x_u + s,
\]
while the interaction mass of any node $v\in A_u$ is
\[
\wn_v \; = \; \w x_v + x_u
\; \leq \; \w x^*_v + x_u
\; \leq \; (1+\eta^2)\mu + x_u.
\]

Since
    $x_u < s - \mu-\eta^2$,
we have that $\wn_v < (1+\eta^2)\mu + s - \mu-\eta^2 < s$,
and hence, $\wn_u > \wn_v$.
Thus, under this perturbation, mass starts flowing from all nodes
$v \in A_u$ to $u$, and this continues until $x_u \geq s - \mu-\eta^2$.
Consequently, the system cannot reach any equilibrium with
    $x_u < s - \mu-\eta^2$;
in particular, it cannot reach any equilibrium with
    $||x^*-x||_1 < s - \mu-\eta^2$.
Since this holds for arbitrarily small $\delta$,
and $s - \mu-\eta^2 > 0$ is a constant independent of $\delta$, we conclude that
$x^*$ is not Lyapunov-stable for this $\w$.
\end{proof}

With Proposition~\ref{prop:loc-dist3} in place, all that remains to
complete the proof of the ``only if'' direction of
Theorem~\ref{thm:local-univ-stability} is to ensure that the active
nodes in any universally stable equilibrium $x^*$ of a bipartite graph
form an independent set, i.e., that each connected
component $C$ of $\activegraph(x^*)$ consists of a single node.
This is implied by Lemma~\ref{lem:uniform}, which shows in general that
if $x^*$ is a universally stable equilibrium, then all the non-trivial connected
components of the subgraph of its active nodes are \emph{not} bipartite graphs.
This completes the proof of Theorem~\ref{thm:local-univ-stability}, as
any connected subgraph of a bipartite graph is a bipartite graph itself.


An additional benefit of Lemma~\ref{lem:uniform} is that it applies to
arbitrary influence graphs, and significantly limits the topologies a
connected component of $\activegraph(x^*)$ can have for a universally
stable equilibrium $x^*$. To state this lemma in the most general
form, we define a class of regular graphs which in particular
subsumes all bipartite graphs, all cliques, and all cycles whose
length is a multiple of $3$. We say that a $d$-regular graph is
\emph{locally balanced} if its
vertices can be partitioned into $k$ disjoint sets
$V_1, V_2, \ldots, V_k$ such that each vertex $v \in V_i$ has exactly
$d/(k-1)$ edges to each of the sets $V_j, j \neq i$.

\omt{
While we cannot prove this fact at present, we can significantly
restrict the remaining candidate subgraphs that \emph{can} occur as
connected components of $\activegraph(x^*)$, which we do with the next
two lemmas.
}

\begin{lemma} \label{lem:uniform}
Let $x^*$ be a universally stable equilibrium and
$C$ a non-trivial connected component of its active subgraph
$\activegraph(x^*)$. Then:
\begin{itemize}
\item[(a)] $C$ is a regular graph, and $x^*$ is uniform on $C$ (i.e., $x_u^*=x_v^*$ for all $u,v\in C$).
\item[(b)] $C$ is {\em not} a bipartite graph, and, more generally, $C$ is not locally balanced.
\end{itemize}
\end{lemma}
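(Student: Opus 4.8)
The plan is to prove the two parts separately: (a) follows purely from the equilibrium condition, while (b) requires an amplified-perturbation (rich-get-richer) argument. For part (a), I would exploit the fact that, by definition, a universally stable $x^*$ is an equilibrium for \emph{every} $\w>1$ (not merely one value). Fix a non-trivial component $C$ of $\activegraph(x^*)$; since the active neighbors of any $u\in C$ are exactly its neighbors inside $C$, its interaction mass is $\wn_u=\w\,x^*_u+\sum_{v\in\influence_u\cap C}x^*_v$. By Proposition~\ref{prop_eq}, $\wn_u=\wn_{u'}$ for every edge $(u,u')$ of $C$. Reading $\wn_u-\wn_{u'}$ as an affine function of $\w$, the identity $\wn_u-\wn_{u'}=0$ holding for all $\w>1$ (indeed any two values suffice) forces both its $\w$-coefficient and its constant term to vanish. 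The $\w$-coefficient gives $x^*_u=x^*_{u'}$, and propagating this along the connected $C$ yields uniformity $x^*_u=\mu>0$; the constant term then reads $\mu\deg_C(u)=\mu\deg_C(u')$, so $C$ is $d$-regular. This is the easy half.

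For part (b), suppose for contradiction that $C$ is locally balanced with parts $V_1,\dots,V_k$, $k\ge 2$. I would first record the structure forced by $d$-regularity and local balance: counting $V_i$–$V_j$ edges from both sides gives $|V_i|=|V_j|$, and since each vertex of $V_i$ already uses all $d=(k-1)\cdot\frac{d}{k-1}$ of its edges on the other parts, each $V_i$ is independent. Using (a), the common mass is $\mu$. It then suffices to produce a single bad $\w$ violating stability. Pick $\w>\max\{1,\,d/(k-1)\}$ so that $c:=\w-\frac{d}{k-1}>0$. The key computation is that if every node of $V_i$ carries mass $m_i$, then (using independence of the parts) $\wn_{V_i}=c\,m_i+\frac{d}{k-1}\sum_j m_j$, so the interaction masses are ordered exactly as the $m_i$.

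Now consider the perturbation raising each node of $V_1$ to $\mu+\eps$ and lowering each node of $V_2,\dots,V_k$ to $\mu-\eps/(k-1)$ (mass-preserving on $C$, by equal part sizes), leaving all other components and all inactive nodes untouched; note inactive nodes stay inactive, since every flow into a zero-mass node is proportional to that node's mass by~\eqref{eq:model-flows}. By the symmetry of~\eqref{eq:model-flows} under the block structure, all nodes of a given part retain equal mass, so the dynamics reduces to the two scalars $m_1(t)$ and the common value $m_{\mathrm{poor}}(t)$ of the remaining parts, coupled by $m_1+(k-1)m_{\mathrm{poor}}=k\mu$. Since $c>0$ and $m_1>m_{\mathrm{poor}}$, flow runs strictly from the poor parts into $V_1$, so $m_1(t)$ is non-decreasing (strictly increasing whenever $m_{\mathrm{poor}}>0$) and bounded by $k\mu$; hence it converges, and its limit is a fixed point of the continuous block map. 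The only such fixed point compatible with $m_1\ge\mu+\eps$ has $m_{\mathrm{poor}}=0$ and $m_1=k\mu$, so the trajectory leaves every ball of radius below $2(k-1)|V_1|\mu$ about $x^*$, no matter how small $\eps$ is. Thus $x^*$ is unstable for this $\w$, contradicting universal stability; hence $C$ is not locally balanced, and bipartiteness is the special case $k=2$.

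The main obstacle is part (b): establishing genuine, not merely linearized, escape. The delicate points are (i) verifying that the block structure is preserved exactly by the nonlinear update rule — this is precisely where local balance, rather than mere regularity, is needed, since it makes the net flow identical across all nodes of a part — and (ii) converting monotonicity into a definite, $\eps$-independent displacement by identifying the limit through the fixed-point/continuity argument, rather than relying on a convergence theorem for the local model (which is unavailable in general).
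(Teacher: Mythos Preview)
Your proof is correct and follows essentially the same approach as the paper's: part (a) via the equilibrium condition at two values of $\w$, and part (b) via a block-symmetric perturbation favoring one part $V_1$ that amplifies until all mass concentrates on $V_1$. Your execution differs only cosmetically---the formula $\wn_{V_i}=c\,m_i+\tfrac{d}{k-1}\sum_j m_j$ replaces the paper's explicit computation of interaction-mass changes at $\w=d+1$, and a direct monotone-convergence/fixed-point argument replaces the paper's appeal to Lemma~\ref{lem_gen_conv}.
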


\begin{proof}
We begin by proving part (a).
Let $u,v \in C$ be a pair of \emph{adjacent} nodes.
The equilibrium conditions for $\w=2$ imply that
$ 2 x^*_v + \sum_{w \in \influence_v} x^*_w
= 2 x^*_u + \sum_{w \in \influence_u} x^*_w$,
and the ones for $\w=3$ that
$ 3 x^*_v + \sum_{w \in \influence_v} x^*_w
= 3 x^*_u + \sum_{w \in \influence_u} x^*_w$.
Subtracting the first equation from the second shows that
$x^*_v = x^*_u$.
Because $C$ is a connected component, applying this argument along all
edges in $C$ proves that all nodes in $C$ must have the
same mass $\mu$.

The interaction mass of node $v$ with $\w = 2$ is therefore
$\wn_v = \mu \cdot (|\influence_v \cap C| + 1)$.
Considering again a pair $u,v$ of adjacent nodes, the equilibrium
condition $\wn_u = \wn_v$ implies that
$|\influence_u \cap C| = |\influence_v \cap C|$.
Again by connectivity of $C$, this implies that all nodes
in $C$ have the same degree, so $C$ is regular.

Next, we prove part (b). Because $x^*$ is universally stable,
part (a) implies that $C$ is $d$-regular for some $d\geq 1$,
and $x^*_u = \mu$ (for some $\mu$) for all $u \in C$.
Assume for contradiction that $C$ is locally balanced, and
let $V_1, \ldots, V_k$ be the $k$ partitions of $C$.
Because\footnote{Recall that $\infgraph[S]$ denotes the
    induced subgraph of $\infgraph$ on the node set $S$.}
$\infgraph[V_i \cup V_j]$ is a $d/(k-1)$-regular bipartite
graph for each pair $i \neq j$, all partitions $V_i$ must have the
same size $s = |C|/k$.

Set $\w=d+1$, and let $\delta>0$ be arbitrary.
Consider perturbed vectors of the following form:
$x_v = x^*_v+\frac{1}{s} \cdot \delta$ for every $v \in V_1$ and
$x_u=x_u-\frac{1}{s(k-1)}\cdot \delta$ for every $u \notin V_1$.
(That is, a total mass of $\delta$ is removed uniformly from
nodes not in $V_1$, and added uniformly over the nodes in $V_1$.)

In moving from $x^*$ to $x$, the interaction mass of each node
$v \in V_1$ changes by
$\w \cdot \frac{1}{s} \cdot \delta
  - d \cdot \frac{1}{s(k-1)} \cdot \delta > 0$,
while the interaction mass of each node $u \notin V_1$ changes by
\begin{multline*}
-\w \cdot \frac{1}{s(k-1)} \cdot \delta
- \frac{d(k-2)}{k-1} \cdot \frac{1}{s(k-1)} \cdot \delta
+ \frac{d}{k-1} \cdot \frac{1}{s} \cdot \delta \\
=
\Big(-(d+1)-\frac{d(k-2)}{k-1}+d\Big) \cdot \frac{1}{s(k-1)} \cdot \delta
\; < \; 0.
\end{multline*}
Thus, for any such vector $x(t)=x$, all flows are directed from
nodes not in $V_1$ to nodes in $V_1$.
Furthermore, by symmetry of the original vector $x^*$ and the
perturbation, the mass vectors $x(t')$ for $t' > t$ will be
of the same form, for a different $\delta' > \delta$.
Thus, the same argument will apply at all times.
Hence, the direction of flows never changes,
and Lemma~\ref{lem_gen_conv} guarantees that the system converges.
Since the interaction mass of all nodes in $V_1$ is only increasing,
and the interaction mass of all nodes not in $V_1$ is only decreasing,
the only equilibrium $y$ the system can converge to is one
in which all nodes outside of $V_1$ have zero mass.
In particular, this means that even starting from
$||x^*-x||_1 = \delta'$ (which would correspond to using
$\delta=\delta'/2$ in our analysis),
$||x^*-y||_1$ is not bounded in terms of $\delta'$,
so $x^*$ is not Lyapunov-stable.
\end{proof}


Lemma \ref {lem:uniform} considerably narrows down the set of
equilibria for which the question of whether or not they are
universally stable remains open.

More specifically, it only remains to
consider mass vectors $x$ in which there is a non-trivial
connected $C$ of $\activegraph(x)$ such that
$C$ is a $d$-regular graph (for some
$d \geq 1$), is not a locally balanced graph (in particular not a bipartite
graph), and for every $u,v \in C$, $x_v=x_u$.
We conjecture that such mass vectors are not universally stable;
it would then follow that in any universally stable equilibrium,
all components have size $1$, and hence by Proposition~\ref{prop:loc-dist3}
the active nodes would be at mutual distance $3$.
Accordingly, we formulate the following:

\begin{conjecture} \label{conj:stability}
Under the local model, a mass vector is a universally stable equilibrium
if and only if its active nodes are at pairwise distance
at least $3$ in the influence graph.
\end{conjecture}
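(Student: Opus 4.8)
The \emph{if} direction is already Proposition~\ref{prop:3-separated}, so only the \emph{only if} direction remains. Let $x^*$ be a universally stable equilibrium. By Lemma~\ref{lem:uniform}(a), every non-trivial connected component $C$ of $\activegraph(x^*)$ is $d$-regular for some $d\ge 1$ and carries a uniform mass $\mu>0$; and by Proposition~\ref{prop:loc-dist3} it suffices to rule out the existence of \emph{any} such $C$, since then the active nodes form an independent set and $x^*$ is $3$-separated. The plan is therefore: assuming a non-trivial $d$-regular uniform component $C$ exists, exhibit a single value $\w>1$ for which $x^*$ fails to be Lyapunov stable, contradicting universal stability. I would obtain this $\w$ from a spectral (linearized-stability) analysis of the dynamics restricted to $C$, which is more robust than the exact self-reinforcing perturbation used in Lemma~\ref{lem:uniform}(b).

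First observe that the restriction of the dynamics to $C$ is decoupled from the rest of the graph: an inactive neighbour of $C$ receives \netflow proportional to its own (zero) mass and so stays inactive, and distinct active components share no edge, so perturbations supported on $C$ form an invariant subspace on which no outside interaction mass changes. On this subspace the update map is smooth at the interior fixed point $\mu\mathbf{1}$ (every interaction mass equals $(\w+d)\mu>0$), so I linearize. Writing $x=\mu\mathbf{1}+\xi$, letting $A$ be the adjacency matrix of $C$ and $L=dI-A$ its Laplacian, a first-order expansion of the \netflow in~\eqref{eq:model-flows} yields the Jacobian
\[
M \;=\; I+\tfrac{\ip}{(\w+d)^2}\,L\,(\w I+A),
\]
whose eigenvalues, since $A$ and $L$ commute, are $1+\tfrac{\ip}{(\w+d)^2}(d-\lambda)(\w+\lambda)$ as $\lambda$ ranges over the spectrum of $A$. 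The all-ones eigenvector ($\lambda=d$) is the mass-conserving direction and yields eigenvalue $1$; every other eigenvector is orthogonal to $\mathbf{1}$, hence mass-preserving. Let $\lambda_2<d$ be the second-largest eigenvalue of $A$; connectivity gives $-d\le\lambda_2<d$. Choosing $\w=d+1>1$ makes both factors $d-\lambda_2>0$ and $\w+\lambda_2=d+1+\lambda_2\ge 1>0$ strictly positive, so the corresponding eigenvalue of $M$ exceeds $1$, for \emph{every} non-trivial regular component, balanced or not.

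I would then conclude from the standard principle that a fixed point of a $C^1$ map whose Jacobian has spectral radius greater than $1$ cannot be Lyapunov stable: the expanding $\lambda_2$-eigendirection lies in the mass-preserving subspace supported on $C$, and since $\mu>0$ its small multiples are genuine non-negative mass vectors, so arbitrarily small such perturbations are carried out of a fixed neighbourhood. This contradicts stability at $\w=d+1$, ruling out every non-trivial component and completing the characterization. The main obstacle is making this final step fully rigorous \emph{within the constrained state space}, rather than the eigenvalue computation itself: invoking the linearization theorem as a black box glosses over (i) the decoupling/invariance of the within-$C$ subspace, (ii) the fact that $\mu\mathbf{1}$ lies in the relative interior of the component's mass simplex, so that the non-negativity constraints are inactive and the map is smooth there, and (iii) producing an explicit orbit that provably escapes an $\eps$-ball; the honest route is to reprove instability directly from the unstable eigenspace via a cone or Lyapunov-function argument. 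Finally, the choice of large $\w$ matches intuition and offers a sanity check: as $\w\to\infty$ the \netflow direction is dominated by the self-interaction term $\w x_u$, so the local model degenerates toward the pure rich-get-richer behaviour of the global model, where Theorem~\ref{thm:global-stable} already forbids stable non-trivial components.
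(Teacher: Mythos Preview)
This statement is left open in the paper: it is formulated as Conjecture~\ref{conj:stability}, and the paper's contribution toward it is Lemma~\ref{lem:uniform}, which handles only the locally balanced (in particular bipartite) case via an explicit self-similar perturbation. Your proposal goes strictly further and, as written, would resolve the conjecture in full.

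Your Jacobian computation is correct. At the uniform point $\mu\mathbf{1}$ on $C$, the first-order variation of each \netflow depends only on the difference of linearized interaction masses, giving $M=I+\tfrac{\ip}{(\w+d)^2}\,L(\w I+A)$; since $L=dI-A$ and $\w I+A$ are polynomials in the symmetric matrix $A$, they commute and the eigenvalues of $M$ are exactly $1+\tfrac{\ip}{(\w+d)^2}(d-\lambda)(\w+\lambda)$ over the spectrum of $A$. With $\w=d+1$ and $\lambda\in[-d,d)$ this is strictly greater than $1$, so every mass-preserving eigendirection is unstable, for \emph{any} non-trivial connected $d$-regular $C$ --- no balance hypothesis needed.

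The obstacles you flag are real but routine. For (i), each \netflow in~\eqref{eq:model-flows} carries the factor $x_u x_v$, so inactive neighbours of $C$ stay at zero under perturbations supported on $C$, and other active components share no influence edges with $C$; the affine subspace ``all nodes outside $C$ at their $x^*$-values, total mass on $C$ fixed'' is therefore forward-invariant. For (ii), $\mu>0$ puts $\mu\mathbf{1}$ in the relative interior of this subspace, where the update map is smooth (all $\wn_u>0$) and the non-negativity constraints are slack. For (iii), the classical linearized-instability criterion then applies on this $(|C|{-}1)$-dimensional affine space: a $C^1$ fixed point whose Jacobian has an eigenvalue of modulus greater than $1$ is not Lyapunov stable (via the unstable-manifold theorem, or a direct cone argument around the $\lambda_2$-eigenspace). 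Since the escaping orbits live inside the invariant subspace, they are genuine orbits of the full system, and $x^*$ fails stability at $\w=d+1$.

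In short, your spectral argument replaces the paper's symmetry-based perturbation with one that needs no structural hypothesis on $C$ beyond what Lemma~\ref{lem:uniform}(a) already provides, thereby closing the gap the paper leaves open. The remaining work is bookkeeping, not a missing idea.
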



\newcommand{\half}{\ensuremath{\frac{1}{2}}\xspace}

\subsection{Characterization of Lyapunov Stable Equilibria for $\w=1$}
For $\w>1$, we have shown that any equilibrium whose active nodes
form an independent set of pairwise node distance at least 3
is Lyapunov-stable. Perhaps surprisingly, this ceases to be true for $\w=1$.
Indeed, on the path of length $4$, the equilibrium $x^*=(\half,0,0,\half)$
is not Lyapunov-stable.

We can see this instability as follows. Consider vectors of the form
$x^{(\delta)} = (\half-\delta,\delta,\delta,\half-\delta)$.
Under $x^{(\delta)}$, for any $\delta \in (0,\half)$, the interaction mass
of nodes 1 and 4 is strictly smaller than the interaction mass of
nodes 2 and 3 (whose interaction masses are equal).
This implies that no vector $x^{(\delta)}$ can be an equilibrium for
$\delta \in (0,\half)$, and that flow will always be directed from
nodes 1 and 4 to nodes 2 and 3. Furthermore, the flow from node 1 to
node 2 is equal to the flow from node 4 to node 3, implying that at
later times, the mass vector will be of the form $x^{(\delta')}$
with $\delta' > \delta$. As we know by Theorem \ref{thm_loc_5}
that the 4-path always converges, the system converges to some mass vector
    $y=x^{(\delta^*)}$
such that $\delta^*>0$.\footnote{%
This also follows directly from our argument with $x^{(\delta)}$.}
Since the update rule is continuous, this $y$ must be an equilibrium. We have proved that the only such equilibrium is the one with $\delta^*=\tfrac12$. Thus, starting from the perturbation
$x^{(\delta)}$ of $x^*$, the system can only converge to a state
$y$ in
which $y_1 = y_4 = 0$.


While a pairwise distance of 3 between active nodes is not enough
to guarantee stability, a pairwise distance of 4 is sufficient.

\begin{theorem} \label{thm_loc_ind}
Let $x^*$ be a mass vector whose active nodes have pairwise
distance at least 4.
Then, $x^*$ is a Lyapunov-stable equilibrium for $\w=1$.
\end{theorem}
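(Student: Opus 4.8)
The plan is to observe first that, since the active nodes of $x^*$ are at pairwise distance at least $4$, they in particular form an independent set in $\infgraph$, so Proposition~\ref{prop_eq} already guarantees that $x^*$ is an equilibrium and the entire task is to establish Lyapunov stability. The reason this needs a new argument is that for $\w=1$ the interaction mass is simply $\wn_u=\sum_{v\in\{u\}\cup\influence_u}x_v$, the total mass on the closed neighborhood of $u$, so Proposition~\ref{prop_gen_stb} does not apply directly: for an active node $a$ and an inactive neighbor $u$ one has $\wn^*_a=\wn^*_u=x^*_a$, i.e. the strict inequality required there degenerates to an equality. This is exactly what breaks down at distance $3$ (as the preceding $4$-path example shows), so the extra slack provided by distance $4$ must be exploited more carefully.

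First I would set up the structural decomposition that distance $4$ buys. For each active node $a$ let $B_a=\{a\}\cup\influence_a$ be its closed neighborhood and let $D=V\setminus\bigcup_a B_a$. Distance $\geq 4$ makes the sets $B_a$ pairwise disjoint and, moreover, pairwise non-adjacent (an edge between $B_a$ and $B_{a'}$ would force distance $\leq 3$); consequently every edge incident to $B_a$ is either internal to $B_a$ or joins $\influence_a$ to $D$. One also checks that every node of $D$ is inactive with only inactive neighbors, so at $x^*$ the interaction masses are exactly $x^*_a$ on $B_a$ and $0$ on $D$, and each inactive neighbor $u\in\influence_a$ has $\wn^*_u=x^*_a$ because $a$ is its unique active neighbor.

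Next I would run a quantitative invariant argument. Writing $I(t)=\sum_{v\notin A(x^*)}x_v(t)$ for the inactive mass and $m_a(t)=\sum_{v\in B_a}x_v(t)$, a base point with $\|x(t_0)-x^*\|_1\leq\delta$ satisfies $I(t_0)\leq\delta$ and $|m_a(t_0)-x^*_a|\leq\delta$. The key local fact is that as long as every active node carries mass exceeding the inactive mass, every boundary edge $(u,w)$ with $u\in\influence_a,\,w\in D$ has $\wn_u\geq x_a(t)>I(t)\geq\wn_w$, so its \netflow points into $B_a$. Since $D$ exchanges mass with the rest of the graph only across such boundary edges, this makes the total $D$-mass non-increasing and each $m_a(t)$ non-decreasing; combined with conservation of total mass this pins every $m_a(t)$ into the window $[\,x^*_a-\delta,\;x^*_a+(|A(x^*)|-1)\delta\,]$.

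The main obstacle, and the place where distance $4$ rather than $3$ is essential, is controlling the redistribution of mass \emph{inside} each $B_a$: mass can leak from $a$ onto a neighbor $u\in\influence_a$ precisely when $\wn_u>\wn_a$, and an edge-by-edge cancellation shows $\wn_u-\wn_a$ equals the $D$-mass adjacent to $u$ minus the $\influence_a$-mass not adjacent to $u$, so leakage occurs only when residual $D$-mass sits next to $u$. Because the $D$-mass is non-increasing and never replenished from outside, I would bound the leakage \netflow $a\to u$ against the simultaneous inward \netflow from $D$ onto $u$, show the former is smaller by a factor $O(\delta)$, and sum over all time to conclude that the total mass ever leaked off the active nodes is $O(\delta^2)$ while the neighbor mass $\sum_{u\in\influence_a}x_u(t)$ never exceeds $O(\delta)$. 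The subtlety is that the internal flow directions may oscillate, so this cannot be read off from a single step and needs a telescoping/comparison argument over the whole trajectory; I expect this to be the technical heart of the proof. Finally, using the identity $\|x(t)-x^*\|_1=2I(t)+2P(t)$ with $P(t)=\sum_{a}\big(x_a(t)-x^*_a\big)^{+}$, both terms are $O(\delta)$ by the preceding bounds, so choosing $\delta$ a small enough multiple of $\eps$ (and smaller than $\tfrac13\min_a x^*_a$ to keep the invariant self-sustaining) yields $\|x(t)-x^*\|_1\leq\eps$ for all $t\geq t_0$, which is Lyapunov stability.
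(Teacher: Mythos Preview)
Your proposal is correct and takes essentially the same approach as the paper: both use the identical decomposition into active nodes $A$, their neighbors $\bigcup_a\influence_a$, and the far set $D=W$ at distance $\geq 2$, and both rest on the same step-wise charging argument---leakage from an active node $a$ to a neighbor $u$ is dominated by the simultaneous inflow from $W\cap\influence_u$ into $u$---maintained as an inductive invariant over time. Your sharper observation that the leakage-to-inflow ratio is actually $O(\delta)$ (so total leakage is $O(\delta^2)$ rather than the paper's $O(\delta)$) and your exact identity $\|x-x^*\|_1=2I(t)+2P(t)$ are correct refinements of the paper's bookkeeping but are not needed for the conclusion.
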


\begin{proof}
The proof is much more involved than the proof of
Proposition~\ref{prop_gen_stb}, for the following reason:
even for arbitrarily small perturbations to $x^*$, it is
possible that inactive neighbors $v$ of an active node $u$ have
higher interaction mass; thus, the conditions of
Proposition~\ref{prop_gen_stb} do not apply, and in fact,
$u$ could lose mass over time.
However, we will be able to show that the total mass $u$ loses,
starting from a perturbation of magnitude at most $\delta$,
is bounded by a function $g(\delta) \to 0$ as $\delta \to 0$.

Let $A$ be the set of all $x^*$-active nodes, and
let $x(0)$ be a perturbation of $x^*$ with
$||x^* - x(0)||_1 \leq \delta \leq  \tfrac18 \cdot \min_{u \in A} x_u^*$.
We will show below that for each node $u \in A$, and all times $t$,
we have that $|x_u(t) - x^*_u| \leq 2\delta$.
Because
\[
\sum_{v \notin A} | x_v(t) - x_v^* |
\; = \; \sum_{v \notin A} x_v(t)
\; = \; \sum_{u \in A} (x_u^* - x_u(t))
\; \leq \; \sum_{u \in A} |x_u(t) - x_u^*|,
\]
we obtain that
\[
||x(t) - x^*||_1
\; = \; \sum_{u \in V} |x_u(t) - x_u^*|
\; \leq \; 2 \sum_{u \in A} |x_u(t) - x_u^*|
\; \leq \; 4n\delta
\; \to \; 0 \;\; \mbox{ as } \delta \to 0.
\]

It remains to prove the inequality $|x_u(t) - x_u^*| \leq 2\delta$
for all nodes $u \in A$ and times $t$.
Define $W = V \setminus (A \cup \bigcup_{u \in A} \influence_u)$ to be
the set of all nodes at distance at least 2 from all active nodes.
We will prove the inequality by showing that any
\netflow from $u$ to its neighbors $v$ can be ``charged'' against
\netflow from $W$ to $v$.
More formally, we will simultaneously prove the following invariants
for all times $t$ and any set $U \subseteq A$:
\begin{subequations} \begin{align}
\sum_{w \in W} x_w(t) & \leq \sum_{w \in W} x_w(0),
\label{eqn:upper-bound-xw}\\
\sum_{u \in U} x_u(t) & \geq \sum_{u \in U} x_u(0) - \sum_{w \in W} (x_w(0) - x_w(t))
\quad \text{ for any set } U \subseteq A.
\label{eqn:lower-bound-xu}
\end{align} \end{subequations}

Let $t$ be an arbitrary time and assume that the invariants
hold at time $t$.
First, we notice some useful consequences of the invariants, including
the desired fact that $|x_u(t) - x_u^*| \leq 2\delta$.

From Inequality~\eqref{eqn:upper-bound-xw}, we get that
$\sum_{w \in W} x_w(t) \leq \sum_{w \in W} x_w(0) \leq \sum_{v \notin A} x_v(0) \leq \delta$.
Substituting this bound into Inequality \eqref{eqn:lower-bound-xu} with
$U = \{u\}$, and using that $x_u(0) \geq x^*_u - \delta$ gives us that
$x_u(t) \geq x^*_u - 2\delta$.
Similarly, using Inequality~\eqref{eqn:lower-bound-xu} with
$U = A \setminus \{u\}$ gives us an upper bound of
$x_u(t) \leq x^*_u + 2\delta$.
So we have shown that $|x_u(t) - x_u^*| \leq 2 \delta$.

Let $(w,v), w \in W, v \notin W$ be an arbitrary edge,
and $u$ the unique active neighbor of $v$ in $\infgraph$.
The \netflow on the edge $(w,v)$ is
$\flow{w}{v}(t) ={\ip \cdot} \frac{x_w(t)\, x_v(t) (\wn_v(t)-\wn_w(t))}{\wn_w(t)\, \wn_v(t)}$.
We have just seen that $x^*_u - 2\delta \leq x_u(t) \leq x^*_u + 2\delta$,
so we can also bound $\wn_v(t) \geq x^*_u - 2\delta$.
Applying Inequality~\eqref{eqn:lower-bound-xu} with
$U = A \setminus \{u\}$ also gives us an upper bound of
$\wn_v(t) \leq 1-\sum_{u' \in A, u' \neq u} x_{u'}(t) \leq x^*_u + 2\delta$.
Furthermore, using the definition of $W$ and
Inequality~\eqref{eqn:lower-bound-xu} for $U=A$,
\[
\wn_w(t)
\; \leq \; \sum_{v \notin A} x_v(t)
\; \leq \; \sum_{v \notin A} x_v(0) + \sum_{w \in W} (x_w(0) - x_w(t))
\; \leq \; 2\delta.
\]
Substituting these bounds, we get that
\begin{align}
\flow{w}{v}(t) & \geq
\ip \cdot \frac{x_w(t)\, x_v(t) (x^*_u-4\delta)}{2\delta (x^*_u+2\delta)}.
\label{eqn:upper-bound-wv-flow}
\end{align}

By definition of $\delta$, this quantity is always non-negative.
In particular, this means that \netflow goes from $w$ to $v$; since the
edge $(w,v)$ was arbitrary, we have established the
invariant~\eqref{eqn:upper-bound-xw}.

Next, fix an arbitrary node pair $u \in A, v \in \influence_u$.
The \netflow from $u$ to $v$ is
\begin{align*}
\flow{u}{v}(t)
& = \ip \cdot \frac{x_u(t)\, x_v(t) (\wn_v(t)-\wn_u(t))}{\wn_u(t)\, \wn_v(t)}
\; \leq \;
\ip \cdot \frac{x_u(t)\, x_v(t) \sum_{w \in W \cap \influence_v} x_w(t)}{(x_u(t))^2} \\
& = {\ip \cdot} \frac{1}{x_u(t)} \sum_{w \in W \cap \influence_v} x_w(t)\, x_v(t)
 \; \leq \;
    \ip \cdot \frac{1}{x_u^* - 2\delta} \sum_{w \in W \cap \influence_v} x_w(t)\, x_v(t).
\end{align*}
On the other hand, summing the bound \eqref{eqn:upper-bound-wv-flow}
over all nodes $w \in W \cap \influence_v$, we get that
\begin{align*}
\sum_{w \in W \cap \influence_v} \flow{w}{v}(t)
& \geq
\ip \cdot \frac{x^*_u-4\delta}{2\delta (x^*_u+2\delta)} \cdot
\sum_{w \in W \cap \influence_v} {x_w(t)\, x_v(t)}.
\end{align*}

Because $\delta \leq x^*_u/8$, we get that
$\frac{1}{x_u^* - 2\delta}
\leq \frac{x^*_u-4\delta}{2\delta (x^*_u+2\delta)}$,
so the \netflow from $u$ to $v$ is at most the total \netflow from
all $w \in W \cap \influence_v$ to $v$.
For any set $U \subseteq A$, summing this inequality over all
$u \in U$ (and noticing that we never double-count the same edge)
now shows that the total flow out of $U$ is no more than the total
  flow out of $W$; hence, the decrease in $U$'s mass can be charged to
  a corresponding decrease of mass in $W$, and we have
established Invariant~\eqref{eqn:lower-bound-xu}. 
\end{proof}


\omt{
We begin
by providing an example in which an independent set with distance $3$
between nodes is not Lyapunov-stable:
\begin{claim}
For a path of length $4$, the equilibrium $x_1=0.5,x_2=0,x_3=0,x_4=0.5$ is not Lyapunov-stable.
\end{claim}
\begin{proof}
Pick $\delta>0$ arbitrarily small.
Let $\delta' = \frac{\delta}{4}$.
Consider the perturbation $x'_1=0.5 - \delta'$,
$x'_2=\delta'$, $x'_3=\delta'$, $x'_4=0.5-\delta'$.
After the perturbation is performed, $x_2$ and $x_3$ have greater interaction
mass than $x_1$ and $x_4$.
Since $x_2=x_3$ by symmetry we have that
flow only goes from $1$ to $2$ and from $4$ to $3$ and does not change
its direction. Hence, the perturbed system converges. It has to
converge to an equilibrium in which $x'_1$ and $x'_4$ are $0$, since
this is the only possible equilibrium once we have that $x'_2$ and
$x'_3$ are always positive.
\end{proof}}

\omt{
Loosely speaking,
this is done by bounding the mass that an $x^*$-active node can lose to
its neighbor $v$ by the mass that $v$ can gain from nodes $w$ which
are not neighbors of any $x^*$-active nodes. The proof is completed by
noting that the last quantity is bounded by the mass of nodes $w$
which are not neighbors of any $x^*$ active nodes which cannot gain
any extra mass.

\begin{proof}
Let $A_i$ be the set of all nodes at distance $i$ from active nodes
(for example $A_0$ is the set of all active nodes in $x^*$). In Claim
\ref{clm_loc_flows} we show that there exists a $\delta$ arbitrarily
small such that for every perturbation of size $\delta$ all
$x^*$-active can only lose a small amount of mass. Or more formally,
for every time $t$: $\forall u \in A_0$,  ${x_u^*} - 2n\delta < x_u(t)
\leq {x_u^*} + \delta$, where $n=|A_0|$ . Note that this property
immediately implies stability as we have that for all time $t$:
$\sum_{u \in A_0} |x_u^* - x_u(t) | < 2n^2 \delta$, $\sum_{v \in A_1}
|x_v^* - x_v(t) | < 2n^2 \delta$ and $\sum_{w \in A_{\geq 2}} |x_w^* -
x_w(t) | <  \delta$. Therefore we have that for every time $t$: $||x^*
- x(t)||_1 < (4n^2+1) \delta$. This in turn implies that the $x^*$ is
Lyapunov-stable.
\end{proof}} 

\omt{ 
\begin{claim} \label{clm_loc_flows}
Let $x$ be a $\delta$-perturbation of $x^*$ for $\delta < \min_{u \in
A} \dfrac{{x_u^*}}{10n}$, where $n = |A|$.
Then for every $t>0$, we have $\sum_{u \in A}
x_u(t) \geq \sum_{u \in A} x_u^* - 2\delta = 1 - 2\delta$, and for
every $u \in A$, we have ${x_u^*} - 2n\delta < x_u(t) \leq {x_u^*} +
\delta$.
\end{claim}
\begin{proof}
Let $L_i(u)$ denote the set of nodes at distance exactly $i$ from a node $u$.
First we observe that for every node $u \in A$ and every time $t$,
we have $x_u(t) \leq {x_u^*} + \delta$.
This is because flow does not leave $L_1(v)$ for any node $v \in A$,
and so the only way for $u$'s mass to increase is for it to gain
mass from its own neighborhood $L_1(u)$ and from nodes outside
$\cup_{v \in A} L_1(v)$.
This quantity is bounded by the total mass that was spread to these
nodes in the initial perturbation, which is at most $\delta$.

Our main focus will be on proving that for all $t$, we have
$\sum_{u \in A} x_u(t) \geq
 \sum_{u \in A} x_u^* - 2\delta = 1 - 2\delta$.
From this fact combined with the observation above that
$x_u(t) \leq {x_u^*} + \delta$
 for every $u \in A$ and every $t$,
it must be the case that for
 every node $u \in A$ and every $t$, we also
have $x_u(t) > {x_u^*} - 2n\delta$.

The following definitions will be useful for continuing the proof. Let
$\influence^-_u(t)$ be the set of $u$'s neighbors that have greater
interaction masses than $u$. We also generalize the definition of flow
for flows from sets of nodes to nodes, for a set $S$, $\flow{S}{v}(t)
\triangleq \sum_{w \in S} \flow{w}{v}(t)$.

We are now ready to prove the claim. Assume towards a contradiction
that there is a time for which $\sum_{u \in A} x_u(t) < \sum_{u \in
A} x_u^*(t) - 2\delta$, and let $t_1$ be the first such time.
The proof proceeds by considering how a node $u \in A$ can
lose mass over time; this is only through flow
to neighbors in $\influence^-_u(t)$, and so
$$\sum_{u \in A} x_u(t_1) \geq  \sum_{u \in A} x_u(0)
-\sum_{t=0}^{t_1-1} \sum_{u \in A} \sum_{v \in \influence^-_u(t)}
\flow{u}{v}(t).$$
Now, looking at the nodes $v \in \influence^-_u(t)$, we can ask:
how does a node $v \in L_1(u)$ achieve higher interaction
mass than its neighbor $u \in A$?  It must be due to
the effect of neighbors of $v$ that are not neighbors of $u$ ---
in other words, due to nodes in the set $L_2(u) \cap L_1(v)$.
Since these nodes are outside $\cup_{v \in A} L_1(v)$,
their total mass is always at most $\delta$, and so this
limits the amount by which $v$'s interaction mass can exceed $u$'s
to a function of $\delta$.
This in turn will bound the amount of mass that $u$ can lose to $v$.

More concretely, our main goal will be to show that
for any active node $u$, time $t<t_1$ and $v \in \influence^ -_u(t)$
it is the case that $\flow{u}{v}(t) <  f_{L_2(u) \cap L_1(v) \ra v}(t)$.
Let us first see why this is enough to prove the claim.
Summing this inequality over all $u$ and all time steps before $t_1$,
we have
$$\sum_{t=0}^{t_1-1} \sum_{u \in A} \sum_{v \in \influence^-_u(t)}
\flow{u}{v}(t) < \sum_{t=0}^{t_1-1} \sum_{u \in A} \sum_{v \in
\influence^-_u(t)}  \flow{L_2(u) \cap L_1(v)} {v}(t).$$
Observe that
the right hand side is bounded by $\delta$, since, as we noted
above, all the sources of the flows on the right-hand-side come from
outside $\cup_{v \in A} L_1(v)$; the mass of all such nodes at any
time $t$ is bounded from above by $\delta$ since these nodes have
considerably smaller interaction masses than any node in $A$ or $A_1$.
By construction we have $\sum_{u \in A} x_u(0) \geq \sum_{u \in
A} {x_u^*} - \delta$, and this in turn implies that  $\sum_{u \in A}
x_u(t_1) \geq \sum_{u \in A} x_u^* - 2\delta$ --- contradicting
the definition of the time $t_1$.

So now it remains to
prove that $\flow{u}{v}(t) <  f_{L_2(u) \cap L_1(v) \ra v}(t)$
for every $t<t_1$.  We first bound each of the two sides
of the inequality separately:
\begin{align*}
\flow{u}{v}(t+1) &= \frac{x_u(t) x_v(t) (\wn_v(t)-\wn_u(t))}{\wn_u(t) \wn_v(t)}
 \leq  \frac{x_u(t) x_v(t) \sum_{w \in L_2(u) \cap L_1(v)} x_w(t)}{{x_u}^2(t)} \\
 &= \frac{1}{x_u(t)} \sum_{w \in L_2(u) \cap L_1(v)} x_w(t) x_v(t) \leq \frac{1}{{x_u^*} + \delta} \sum_{w \in L_2(u) \cap L_1(v)} x_w(t) x_v(t) \\
\end{align*}
 The last transition is since by assumption we have that for every $t<t_1$, $x_u(t) <  {x_u^*} + \delta$.

Now, on the other hand, we have
 \begin{align*}
\flow{L_2(u) \cap L_1(v) }{v}(t+1) &=  \sum_{w \in L_2(u) \cap L_1(v)} \frac{x_w(t) x_v(t)(\wn_v(t)-\wn_w(t))}{\wn_w(t) \wn_v(t)} \\
\end{align*}
Observe that $\wn_v(t) >x_u(t)>{x_u^*}-2n\delta$ and that
$\wn_w<2\delta$, since the $x^*$-active nodes hold all the mass,
except for $2 \delta$. By putting the two together we conclude that
$\wn_v(t)-\wn_w(t) >{x_u^*} - 2(n+1)\delta$. We also have that
$\wn_v(t)<x_u(t)+2\delta<{x_u^*}+3\delta$, again since the sum of
masses of all nodes which are not $x^*$-active is bounded by $2\delta$,
and this in turn implies that
$\wn_v(t) \cdot \wn_w(t) < ({x_u^*}+3\delta)2\delta$.  Hence, we have
$$\flow{L_2(u) \cap L_1(v) }{v}(t+1) > \frac{{x_u^*} - 2(n+1)\delta}{({x^*}_u+3\delta)2\delta} \sum_{w \in L_2(u) \cap L_1(v)} x_w(t) x_v(t)$$
To show that
$\flow{u}{v}(t+1) < \flow{L_2(u) \cap L_1(v) }{v}(t+1)$
we need to show that for every $\delta < \delta_0$:
\begin{align*}
\frac{{x_u^*} - 2(n+1)\delta}{({x_u^*}+3\delta)2\delta}>\frac{1}{{x_u^*} + \delta} &\implies ({x_u^*}+3\delta)2\delta < ({x_u^*} + \delta)({x_u^*} - 2(n+1)\delta) \\
&\implies 2\delta {x_u^*} + 6 \delta^2 < {{x_u^*}}^2 -2(n+1)\delta{x_u^*} + \delta{x_u^*}- 2(n+1)\delta^2 \\
&\implies 2(n+4)\delta^2+(2n+3)\delta{x_u^*}<{x_u^*}^2
\end{align*}
Recall that $\delta < \min_{v \in A} \dfrac{{x_v^*}}{10n}$, which
implies that $\delta <  \dfrac{{x_u^*}}{10n}$, and hence the previous
inequality holds.
\end{proof}
} 


\section{Discussion and Conclusions}
\label{sec:conclusions}

In this paper, we presented a novel model of cultural dynamics that captures the
essential aspect of several previously studied models:
the interplay between selection and influence.
We concentrated on two instances of this model. In the basic version,
the {\em global model}, each person selects another person from the entire
population to interact with. In the {\em local model}, a person selects an
interaction partner from a subset of the population consisting of
similar people.
We provided a nearly complete treatment of the global model, showing
that the system always converges from any initial mass vector,
and providing a complete characterization of Lyapunov-stable equilibria.

\subsection{Modeling Choices}
\label{sec:modeling-choices}

\subsubsection{Continuum of Individuals}
We assumed a continuum of individuals, rather than a finite population.
With finite populations, convergence to equilibrium states
is quite immediate. The directed Markov Chain of all possible
assignments of individuals to nodes of the graph is finite. For any
state in which the occupied nodes do \emph{not} form an independent
set, there is a sequence of finitely many moves (which has strictly
positive probability of occurring) which will result in the
individuals being located at an independent set; the latter states are
sinks of the Markov Chain. Thus, convergence in finite time is always
guaranteed with probability 1. The primary focus of studying such
finite Markov Chains would be a focus on the \emph{amount of time} it
would take to reach a sink state.

For very large populations, the predictions of the convergence time
may be of less interest, and we believe that a focus on the stability
of equilibria, and the convergence guarantees in the limit of large
populations, are of interest in understanding the outcomes of the
influence-selection process.

While several articles such as the early work of
Kurtz~\cite{kurtz1970solutions} and the more refined analysis of
Wormald~\cite{wormald:differential-survey} establish precise
connections between discrete-time processes with finite populations
and the mean-field continuous-time limit as both time and the
population are scaled, we do not believe that these approaches are
sufficiently powerful to easily imply results presented here; they
may, however, establish that with high probability, the discrete
version of the problem stays close to the mean-field approximation.
These results also motiviate the continuous-time version of the
problem studied here.

\subsubsection{Continuous Time}
In keeping with much of the literature on population dynamics, we
treat time as continuous.
However, all results proved here hold equally for discrete time;
indeed, an earlier version of this article --- still available on the
arXiv at {http://arxiv.org/abs/1304.7468} (v.1) ---
carried out all proofs in discrete time.
In discrete time, the flows defined in Equation~\eqref{eq:model-flows}
do not correspond to continuous derivatives of the node masses, but
rather to discrete changes from step $t$ to $t+1$.
The proofs of all results stay essentially the same, requiring only
the obvious modifications.
The main exceptions are Lemma~\ref{lem_glob_nondecreasing} and
Theorem~\ref{thm_glob_conv}, whose proofs become slightly more
intricate: a simple continuity argument cannot be applied, as discrete
changes may lead to jumps in the $x_u(t)$. However, careful accounting
and judicious choices of interval sizes $\delta$ still make the proofs
go through.

One argument frequently raised against discrete-time analysis is that
it may lead to oscillations in states, in particular when states are
updated synchronously.
In our proofs, we have not observed such oscillations, and indeed
conjecture that both the discrete-time and continuous-time versions
will always converge to an equilibrium.
In the cases where a proof of this convergence has been difficult,
this difficulty has persisted in both continuous and discrete time.


%

\subsubsection{Discrete Graph Structures}
Throughout, we have assumed that the interaction and influence graphs
are ``discrete,'' in the sense that the propensity of individuals to
interact with (or be influenced by) individuals of adjacent types is
the same for all adjacent types.
In reality, the world will not be as black-and-white; rather, there
will be some types $v$ adjacent to $u$ that are more likely than
others to succeed in convincing individuals from node $u$ to switch.
For instance, in the introductory example from
Section~\ref{sec-intro}, while radical protestants may be most likely
to become moderate protestants (or stay radical), a small fraction may
directly become atheists.

In a more general form of the model, the probability $\ip$ for
switching between types would depend on the specific types, i.e., be
of the form $\ip_{u,v}$, where it is expressly possible that
$\ip_{u,v} \neq \ip_{v,u}$. $\ip_{u,v} = 0$ would correspond to the
absence of an edge from $u$ to $v$.
Similarly, we could assign weights to the edges of the interaction
graph, and have meeting probabilities follow those weights.

This more general model becomes significantly more complex to
analyze.
When $\ip_{u,v} = \ip_{v,u}$ for all pairs $(u,v)$, much of the
analysis in the present work carries over, but for asymmetric
versions, a more complex approach may be needed.
Similarly, given the difficulties caused even by the simple local
model, a more general weighted interaction graph model looks like a
rather formidable challenge.

\subsection{Open Questions}
An open question is to predict the equilibrium to which the system
converges starting from a given initial mass vector.
We suspect that with probability 1 over possible starting
states, the system converges to an equilibrium in which the active
nodes form an independent set.

The local model involves, at its heart, a dynamical system on the
population fractions that is complicated even for small numbers
of variables.  As such, it raises many interesting and challenging
questions, and we have made progress on some of these.
In particular, we know that on paths of length 3
(for $\alpha > 1$) and at most 5 (for $\alpha = 1$),
the system converges from any starting state.
However, it is open whether convergence occurs for all graphs.
On the stability frontier, for $\alpha>1$, we
conjecture that the only Lyapunov-stable equilibria are those in which the
active nodes have pairwise distance at least $3$.
We showed that such equilibria are
indeed Lyapunov-stable, and that a number of other equilibria are not 
Lyapunov-stable --- including ones in which the active nodes form any
other independent set, or ones in which they form a
{\em locally balanced graph} (a class that includes bipartite graphs).
Finally, we would like to raise an even
more challenging question: does the dynamical system defined
by the general model --- or even the general model --- always converge?

\bibliographystyle{plain}
\bibliography{header,refs,new-refs}

\newpage{}
\appendix
\section{ Convergence on a 3-node path (Proof of Theorem 4.1)} \label{app:3conv}

Observe that the flow between two types $u$ and $v$ does not
necessarily go in the same direction at all times, but
instead may change its direction.
To keep track of the changes in direction, we define a
{\em configuration} of the system to be a labeling of all edges
$(u,v)$ in $\infgraph$ by the direction along which flow is traveling
(i.e., whether it travels from $u$ to $v$ or from $v$ to $u$).
A configuration which cannot transition to any other configuration is
called a \emph{sink configuration}. Sink configurations are important
because they guarantee convergence by Lemma \ref{lem_gen_conv}.

In the case of a 3-node path and $\w\geq 2$, there are four possible
configurations.
We study transitions among the configurations as the system evolves
over time; we show that each configuration is either a sink
configuration, or it has the property that any change in the direction
of an edge leads to a sink configuration.
This ensures that there can be at most one change in the direction of
flow as the system evolves.

For a 3-node path and $\w<2$, convergence will follow as a special
case of the more general Theorem~\ref{thm:local-star}, which
establishes convergence for all star graphs when $\w < 2$. For a 3-node path and
$\w \geq 2$, we prove the following lemma (which, jointly with Lemma~\ref{lem_gen_conv},
implies Theorem~\ref{thm-loc-3-w} for $\w\geq 2$).


\begin{lemma} \label{lem_loc_3_w}
Consider the local model with $\w \geq 2$ such that the influence graph
$\infgraph$ is a 3-node path. Then, there is a time $t_0$ such that for any $t>t_0$, no flow changes its
direction.
\end{lemma}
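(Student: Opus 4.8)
The plan is to track the evolution of the system through the lens of \emph{configurations}, as in the outline. Label the three nodes $1,2,3$ along the path, so that in the local model the interaction masses are $\wn_1 = \w x_1 + x_2$, $\wn_2 = \w x_2 + x_1 + x_3$, and $\wn_3 = \w x_3 + x_2$. Since a \netflow on an edge is directed from the endpoint of smaller interaction mass to the one of larger interaction mass, a configuration is completely determined by the signs of $\wn_2-\wn_1$ and $\wn_2-\wn_3$; the four sign patterns give exactly the four configurations. I will use the names (A) both endpoints flow into the middle ($\wn_1,\wn_3 < \wn_2$), (D) the middle flows out to both endpoints ($\wn_2 < \wn_1,\wn_3$), and (B), (C) the two ``monotone'' configurations in which the three interaction masses are linearly ordered (e.g.\ $\wn_1<\wn_2<\wn_3$ for (B), and its mirror image under swapping $1\leftrightarrow 3$ for (C)). Throughout I will use the identity $\wn_1-\wn_3 = \w(x_1-x_3)$, so the relative order of $\wn_1$ and $\wn_3$ is governed purely by the order of $x_1$ and $x_3$.

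First I would show that (A) and (D) are \emph{sink} configurations, and this is where the hypothesis $\w \geq 2$ enters. Consider (A): writing the two incoming flows as $\flow{1}{2}(t),\flow{3}{2}(t) \geq 0$ and substituting the updated masses $x_1(t{+}1)=x_1-\flow{1}{2}$, $x_2(t{+}1)=x_2+\flow{1}{2}+\flow{3}{2}$, $x_3(t{+}1)=x_3-\flow{3}{2}$ into $\wn_2-\wn_1 = (\w-1)(x_2-x_1)+x_3$, a short calculation gives
\begin{align*}
\wn_2(t{+}1)-\wn_1(t{+}1) = \big(\wn_2(t)-\wn_1(t)\big) + (\w-1)\big(2\flow{1}{2}+\flow{3}{2}\big) - \flow{3}{2}.
\end{align*}
For $\w\geq 2$ the last two terms together are at least $2\flow{1}{2}\geq 0$, so $\wn_2-\wn_1$ stays positive; by the $1\leftrightarrow 3$ symmetry the same holds for $\wn_2-\wn_3$, and hence the configuration does not change. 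The computation for (D) is identical after reversing the flow directions, and again needs precisely $\w\geq 2$. This is exactly the reason the theorem is split at $\w=2$: for $\w<2$ these inequalities can fail, and a separate star-graph argument is invoked instead.

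Next I would rule out the ``diagonal'' transition between the two monotone configurations. In configuration (B), where $x_1<x_3$, node $1$ only loses mass and node $3$ only gains mass, so $x_1-x_3$ strictly decreases; by the identity $\wn_1-\wn_3=\w(x_1-x_3)$ the order $\wn_1<\wn_3$ is preserved, so the system can never pass directly from (B) to (C), and symmetrically not from (C) to (B). Consequently, any single change in the direction of an edge out of (B) or (C) must land in (A) or (D). Since (A) and (D) are sinks, this establishes the dichotomy promised in the outline: each configuration is either a sink or has the property that any edge-direction change leads to a sink.

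These pieces assemble into the lemma. Starting from any state, if the system is ever in (A) or (D) it remains there and no direction ever changes; if it starts in (B) or (C) it either stays there forever or undergoes a single direction change into a sink, after which no further change is possible. In all cases there is a time $t_0$ after which no \netflow changes direction, which is the claim (and, combined with Lemma~\ref{lem_gen_conv}, yields Theorem~\ref{thm-loc-3-w} for $\w\geq 2$). The main obstacle I anticipate is the bookkeeping in the sink verifications---pushing the flow substitutions through the interaction-mass differences and confirming that the $\w\geq 2$ slack is genuinely what keeps the sign from flipping---together with the degenerate boundary cases where an interaction-mass difference is exactly zero or a node becomes inactive; these I would handle by continuity of the update rule and the equilibrium characterization of Proposition~\ref{prop_eq}.
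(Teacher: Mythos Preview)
Your proposal is correct and follows essentially the same approach as the paper's proof: both show that configurations (A) and (D) (the paper's cases 1 and 2) are sinks via the same algebraic inequality that requires $\w\geq 2$, and both handle the monotone configurations (B)/(C) (the paper's case 3) by showing the relative order of the endpoint interaction masses cannot flip, so any transition must land in a sink. The only differences are cosmetic---you compute the change in $\wn_2-\wn_1$ directly and use the identity $\wn_1-\wn_3=\w(x_1-x_3)$, whereas the paper first writes recursive formulas for each $\wn_i(t{+}1)$ and compares increments---but the content is identical.
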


\begin{proof}
Let the nodes of the path be $(1,2,3)$, in order.
Consider an arbitrary time $t$,
and recall that $\wn_1(t) = \w x_1(t)+x_2(t)$.
Node 1's interaction mass is decreased at a rate of
$\alpha \flow{1}{2}(t)$ from flow leaving node $1$ to node $2$,
and increased at a rate of $\flow{1}{2}(t) + \flow{3}{2}(t)$
from flow entering node $2$.
By applying the same reasoning to nodes $2$ and $3$, we get:
\begin{equation} \label{eqn:recursive} \begin{split}
\dot{\wn}_1(t) & = \flow{3}{2}(t) - (\w-1) \flow{1}{2}(t), \\
\dot{\wn}_2(t) & = (\w-1) (\flow{1}{2}(t) + \flow{3}{2}(t)),\\
\dot{\wn}_3(t) & = \flow{1}{2}(t) - (\w-1) \flow{3}{2}(t).
\end{split} \end{equation}
Let $x_i = x_i(t), \wn_i = \wn_i(t), \flow{i}{j} = \flow{i}{j}(t)$
for $i,j = 1, 2, 3$.
We will distinguish three cases based on the relative
sizes of $\wn_1, \wn_2, \wn_3$.

\begin{enumerate}
\item If $\wn_2 \geq \wn_1$ and $\wn_2 \geq \wn_3$, then
both $\flow{1}{2}$ and $\flow{3}{2}$ are non-negative.
According to Equation~\eqref{eqn:recursive},
$\wn_2$ increases by at least as much as both
$\wn_1$ and $\wn_3$, so the same inequality will subsequently
  as well.
Thus, we have reached a sink configuration.

\item If $\wn_2 < \wn_1$ and $\wn_2 < \wn_3$, then
both $\flow{1}{2}$ and $\flow{3}{2}$ are negative.
By Equation~\eqref{eqn:recursive},
$\wn_2$ decreases by at least as much as both
$\wn_1$ and $\wn_3$, so again, the inequalities will hold forever, and
we have reached a sink configuration.

\item The remaining case is that $\wn_2 < \wn_1$ and $\wn_2 \geq \wn_3$.
(The case $\wn_2 < \wn_3, \wn_2 \geq \wn_1$ is symmetric.)
Here, $\wn_3$ decreases, $\wn_1$ increases, and $\wn_2$ may increase
or decrease. If the relative order of $\wn_1, \wn_2, \wn_3$ stays the
same for all times after $t$, then we have reached a sink configuration.
Otherwise, at some time $t' \geq t$, we must reach either a
configuration with $\wn_2(t') < \wn_1(t'), \wn_2(t') < \wn_3(t')$ or with
$\wn_2(t') \geq \wn_1(t'), \wn_2(t') \geq \wn_3(t')$.
Either of those configurations is a sink configuration by the preceding two cases.
\end{enumerate}
\omt{
\item $\w < 2$.
\begin{enumerate}
\item If $\wn_2 \geq \wn_1$ and $\wn_2 \geq \wn_3$, then
$\wn_2$ increases, but it is possible that one of
$\wn_1, \wn_3$ increases strictly more than $\wn_2$.
If this does not happen for any time $t' \geq t$, then
we again have a sink configuration.
Otherwise, let $t'$ be the first time such that
$\wn_2(t') < \wn_1(t')$ or $\wn_2(t') < \wn_3(t')$.
We will show in the next case that it is impossible for both
inequalities to hold simultaneously.
Thus, without loss of generality, $\wn_2(t') < \wn_1(t')$ and
$\wn_2(t') \geq \wn_3(t')$. This will be analyzed as the third case.

\item It is impossible that $\wn_2 < \wn_1, \wn_2 < \wn_3$:
substituting the definitions of $\wn_i$ shows that this would imply
$x_3 < (\w-1)(x_1-x_2)$ and
$x_1 < (\w-1)(x_3-x_2)$, which is a contradiction for $\w < 2$.

\item If $\wn_2 < \wn_1, \wn_2 \geq \wn_3$ (or the symmetric case),
then $\flow{1}{2} \leq 0, \flow{3}{2} \geq 0$.
Because $\w < 2$, by Equation~\eqref{eqn:recursive},
the increase in $\wn_1$ is at least as much as any
increase in $\wn_2$, and the decrease in $\wn_3$ is at least as much
as any decrease in $\wn_2$. Thus, the ordering is preserved forever,
and this configuration is a sink configuration.
\end{enumerate}
}
In summary, each configuration is either a sink configuration, or will reach a sink
configuration at the next transition to a different order of interaction
masses.
\end{proof}

Next we prove that for $\w<2$ the process converges on every star graph
(and in particular on the 3-path).

\begin{theorem}\label{thm:local-star}
Under the local model with $\w<2$, if the influence graph is a star graph,
then the system converges from any starting state.
\end{theorem}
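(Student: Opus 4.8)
The plan is to invoke Lemma~\ref{lem_gen_conv}: it suffices to show that after some finite time no \netflow changes direction. Write $0$ for the center of the star and $1,\dots,m$ for the leaves, and normalize the total mass to $1$. Since every leaf's only neighbor is the center, the interaction masses collapse to $\wn_0(t)=(\w-1)x_0(t)+1$ and $\wn_i(t)=\w x_i(t)+x_0(t)$. The \netflow on edge $(0,i)$ is directed out of the center exactly when $\wn_i>\wn_0$, i.e. when $x_i>\theta(t):=\frac{1+(\w-2)x_0(t)}{\w}$. Thus all edges are governed by comparing each leaf to one common threshold $\theta(t)$, which---crucially because $\w<2$---is strictly decreasing in $x_0$.

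First I would prove an \emph{order-preservation} lemma: the one-step leaf update is a single map $g(x)=x+\ip\,x_0 x\,(1/\wn_0-1/(\w x+x_0))$ applied simultaneously to every leaf, and
\[
g(x_i)-g(x_j)=(x_i-x_j)\Big[\,1+K\,\tfrac{\w x_i x_j+x_0(x_i+x_j-\theta)}{\wn_i\wn_j}\,\Big],\qquad K=\tfrac{\ip\,\w x_0}{\wn_0}.
\]
Using $\theta\le 1/\w$, $\wn_i\wn_j\ge x_0^2$, and $K\le\ip\le1$, one checks $K x_0\theta\le x_0^2\le \wn_i\wn_j$, so the bracket is nonnegative and the relative order of the leaf masses is preserved for all time. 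Consequently the set of leaves above $\theta(t)$ is always a prefix in the fixed sorted order, and a configuration (a labeling of the edges by \netflow direction) is completely described by the single integer $k(t)$ equal to the number of leaves currently above threshold.

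Next I would record the structural fact, generalizing the corresponding step for the $3$-path, that for $1\le\w<2$ the center can never be the \emph{unique} minimizer of interaction mass: if all leaves exceeded $\theta$ then $\sum_i x_i>m\,\theta$, and substituting $\sum_i x_i=1-x_0$ and $\theta=\frac{1+(\w-2)x_0}{\w}$ yields an inequality with no solution $x_0\in[0,1)$ once $\w\ge1$. Hence $k(t)\le m-1$ always, so at least one leaf always weakly feeds the center. Together with the prefix structure, this leaves only finitely many possible configurations.

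The main obstacle is to bound the number of direction changes, i.e. to show that $k(t)$ eventually stabilizes. The natural route is a configuration-transition analysis in the spirit of the $3$-path proof, based on the exact recursion $D_i(t+1)=D_i(t)+\w F_i+(\w-2)F$ for $D_i:=\wn_i-\wn_0$, where $F_i=\flow{0}{i}(t)$ and $F=\sum_j F_j$; writing $F_j=a_j D_j$ with $a_j=\ip x_0 x_j/(\wn_0\wn_j)\ge0$ turns this into $D_i(t+1)=D_i(1+(2\w-2)a_i)+(\w-2)\sum_{j\ne i}a_j D_j$. The delicate point---absent in the $3$-path case and the crux of the argument---is that several above-threshold leaves can simultaneously pull one another's $D_i$ downward, so ``large stays large'' is not immediate; I expect to need the bounds $a_j\le\ip/\w^2$ together with the guaranteed below-threshold leaf (from $k\le m-1$), whose contribution carries the favorable sign, to show that $k(t)$ can move in only one direction and that each configuration is either a sink or transitions only toward configurations from which $k$ cannot reverse. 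Once $k(t)$ is constant for all large $t$, no \netflow changes direction and Lemma~\ref{lem_gen_conv} yields convergence.
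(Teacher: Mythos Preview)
Your proposal has a genuine gap, and it stems from missing the one observation that makes the paper's proof short. You prove only that $k(t)\le m-1$ (the center is never the \emph{unique} minimizer), and then try to run a configuration-transition analysis for all $k\in\{0,\dots,m-1\}$; you yourself flag this as ``the crux of the argument'' and leave it at the level of ``I expect to need\ldots''. That step is not carried out, so the proof is incomplete as written.

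The paper's route avoids this difficulty entirely by proving a much sharper structural bound: for any $\w$, the number of leaves with flow directed away from the center is at most $\lfloor\w\rfloor$, hence at most $1$ when $\w<2$. The argument is a two-line summation: if $v$ is a leaf with $\wn_v>\wn_0$ then $(\w-1)(x_v-x_0)>\sum_{w\neq 0,v}x_w$, so $x_v>\frac{1}{\w-1}\sum_{w\neq 0,v}x_w$; summing over the set $R$ of such leaves and noting that each right-hand side already contains all the other members of $R$ gives $\sum_{v\in R}x_v>\frac{|R|-1}{\w-1}\sum_{v\in R}x_v$, i.e.\ $|R|<\w$. Your threshold formulation (comparing each $x_i$ to a common $\theta$) is correct but does not exploit that the large leaves must each dominate the \emph{others}, which is exactly what forces $|R|\le 1$ rather than $|R|\le m-1$.

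With $k(t)\in\{0,1\}$, everything collapses: your order-preservation lemma and the prefix bookkeeping are unnecessary, and the ``delicate point'' about several above-threshold leaves interacting never arises. One then only needs that $k=1$ is a sink, which follows from your recursion $D_i(t+1)=D_i+(2\w-2)a_iD_i+(\w-2)\sum_{j\neq i}a_jD_j$: if exactly one $D_i>0$ and all other $D_j\le 0$, every term on the right is nonnegative for $\w<2$, so $D_i(t+1)>0$. This, together with $k\le 1$, gives the sink structure immediately, and Lemma~\ref{lem_gen_conv} finishes.
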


In the remainder of this section, we prove
Theorem~\ref{thm:local-star}.
More specifically, we show that eventually the system enters a sink
configuration.
The first lemma towards the proof holds for arbitrary values of $\w$.

\begin{lemma}\label{lm:local-star-UB}
Consider the local model with an arbitrary $\w \geq 1$
such that the influence graph $\infgraph$ is a star graph.
Then, at any time, the number of edges with flow directed
away from the center is at most $\lfloor \w \rfloor$.
\end{lemma}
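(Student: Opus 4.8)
The plan is to treat this as a purely static claim about a single mass vector, so fix an arbitrary time and write $x_v = x_v(t)$ and $\wn_v = \wn_v(t)$ throughout. Let $c$ denote the center of the star and call the remaining nodes leaves, and set $L = \sum_{v \neq c} x_v$ for the total leaf mass. In the local model the center interacts with all leaves, so $\wn_c = \w x_c + L$, whereas each leaf $v$ has only $c$ as a neighbor, giving $\wn_v = \w x_v + x_c$. Since flow always runs from the node of smaller interaction mass to the node of larger interaction mass (Equation~\eqref{eq:model-flows}), an edge $(c,v)$ carries flow directed away from the center only if $\wn_v > \wn_c$; hence it suffices to bound the number of leaves $v$ with $\wn_v > \wn_c$. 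Expanding and cancelling the shared term, $\wn_v > \wn_c$ is equivalent to $\w x_v > (\w-1)x_c + L$, i.e. to $x_v > \tau$, where $\tau := \frac{(\w-1)x_c + L}{\w}$.

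Next I would bound the number $k$ of leaves satisfying $x_v > \tau$ by an elementary mass-counting argument. Summing the strict inequality $x_v > \tau$ over these $k$ leaves and using that their combined mass cannot exceed the total leaf mass, I get $L \geq \sum_{v : x_v > \tau} x_v > k\tau = k \cdot \frac{(\w-1)x_c + L}{\w}$. Multiplying through by $\w > 0$ and rearranging gives $(\w - k)L > k(\w - 1)x_c \geq 0$, where the final inequality uses $\w \geq 1$ and $x_c \geq 0$. When $k \geq 1$ there is at least one leaf with $x_v > \tau \geq 0$, so that leaf has positive mass and $L > 0$; dividing the inequality by $L$ then forces $\w - k > 0$, i.e. $k < \w$.

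Finally I would convert $k < \w$ into the claimed bound. Since $k$ is a nonnegative integer, $k < \w$ yields $k \leq \lfloor \w \rfloor$ when $\w \notin \mathbb{Z}$ and $k \leq \w - 1 \leq \lfloor \w \rfloor$ when $\w \in \mathbb{Z}$, so in all cases $k \leq \lfloor \w \rfloor$ (and the case $k = 0$ is trivial). The only points of care are this rounding step and the degenerate configurations, but neither causes real difficulty: any leaf counted automatically satisfies $x_v > \tau \geq 0$ and so has positive mass, meaning the set $\{v : \wn_v > \wn_c\}$ is exactly the candidate set for strictly positive outward flow (and if $x_c = 0$ all flows vanish and the count is $0$). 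I do not anticipate a genuine obstacle here — the statement is a static, per-time-step structural bound that needs no appeal to the dynamics; the slightly delicate part is simply arranging the averaging so that the threshold $\tau$ collapses cleanly and the floor, rather than a weaker bound, emerges.
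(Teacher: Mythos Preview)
Your proof is correct and follows essentially the same approach as the paper: both fix a time step, unpack $\wn_v > \wn_c$ into a lower bound on $x_v$, sum over the offending leaves, and conclude that their number is strictly less than $\w$. The only cosmetic difference is that you work with a single threshold $\tau = \frac{(\w-1)x_c + L}{\w}$ common to all leaves, whereas the paper uses the $v$-dependent bound $x_v > \frac{\sum_{w\neq c,v} x_w}{\w-1}$; the algebra and the final rounding step are the same.
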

\begin{proof}
Denote the central node by $u$.
Fix some time $t$, and let $R$ be the set of all peripheral nodes $v$
such that the flow on the edge $(u,v)$ is directed from $u$ to $v$.
Because the flow is directed towards $v$,
$\wn_v > \wn_u$ for all $v \in R$.
Rearranging this inequality gives us that
$(\w-1)(x_v-x_u) > \sum_{w\neq u,v} x_w$, which implies in particular that
$x_v > \frac{\sum_{w \neq u,v} x_w}{\w-1}$.
Summing over all $v \in R$ now implies that
\begin{align*}
\sum_{v \in R} x_v
& > \sum_{v \in R} \frac{\sum_{w \neq u,v} x_w}{\w-1}
\; \geq \; (|R|-1) \frac{\sum_{v \in R} x_v}{\w-1}.
\end{align*}
Thus we have that $|R| \leq \lfloor \w \rfloor$.
\end{proof}

\begin{lemma}\label{lm:local-star-w}
Consider the local model such that the influence graph
$\infgraph$ is a star graph, and with $\w<2$.
Then, any configuration in which flow on exactly one edge is directed
away from the center node is a sink configuration.
\end{lemma}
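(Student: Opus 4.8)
The plan is to prove that the single outward edge can never reverse, and then to use Lemma~\ref{lm:local-star-UB} to rule out any other edge turning outward; together these pin the configuration in place, which is precisely the sink property. First I would record the local dynamics on the star. Writing $u$ for the center and $v_1,\dots,v_m$ for the leaves, the interaction masses are $\wn_u=\w x_u+\sum_i x_{v_i}$ and $\wn_{v_i}=\w x_{v_i}+x_u$, since in the local model each leaf's only neighbor is $u$. Abbreviating $f_i=\flow{u}{v_i}(t)$ and $F=\sum_i f_i$, the flow form of the update rule~\eqref{eq:def-by-flows} gives $x_u(t+1)=x_u(t)-F$ and $x_{v_i}(t+1)=x_{v_i}(t)+f_i$, and a short substitution produces the star analogue of~\eqref{eqn:recursive}:
\begin{align*}
\wn_u(t+1) &= \wn_u(t)-(\w-1)F, \\
\wn_{v_i}(t+1) &= \wn_{v_i}(t)+\w f_i - F.
\end{align*}

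The heart of the argument is to track the gap $\wn_{v_1}-\wn_u$ for the unique outward leaf $v_1$ (so $\wn_{v_1}>\wn_u$, i.e.\ $f_1>0$, while $f_i\le 0$ for every $i\ge 2$). Subtracting the two recursions and writing $F=f_1+G$ with $G=\sum_{i\ge 2}f_i\le 0$, I would obtain
\[
\wn_{v_1}(t+1)-\wn_u(t+1)=\big(\wn_{v_1}(t)-\wn_u(t)\big)+2(\w-1)f_1+(\w-2)G.
\]
Both added terms are non-negative: $2(\w-1)\ge 0$ since $\w\ge 1$ and $f_1>0$; and $\w-2<0$ since $\w<2$ while $G\le 0$. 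Hence the gap is non-decreasing and remains strictly positive, so the edge $(u,v_1)$ stays outward at every later step. This is the step I expect to be the main obstacle: the raw term $(\w-2)F$ has no definite sign on its own, and the whole argument depends on the regrouping into $2(\w-1)f_1+(\w-2)G$, which separates two pieces whose signs are controlled by the two hypotheses $\w\ge 1$ and $\w<2$ respectively.

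To finish, I would invoke Lemma~\ref{lm:local-star-UB}: since $1\le\w<2$ forces $\lfloor\w\rfloor=1$, at most one edge can point away from the center at any time. Because $(u,v_1)$ always does, it must be the unique outward edge for all $t$, so no leaf $v_i$ with $i\ge 2$ can ever satisfy $\wn_{v_i}>\wn_u$. Consequently the set of outward edges stays equal to $\{(u,v_1)\}$ forever, i.e.\ the configuration never changes, establishing that it is a sink. One mild subtlety I would flag is that an inward edge might reach zero flow; but since Lemma~\ref{lm:local-star-UB} prevents it from ever becoming strictly outward, the outward-edge set---which is what determines the configuration---is genuinely invariant.
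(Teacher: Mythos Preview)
Your proposal is correct and follows essentially the same approach as the paper's proof: both compute the one-step change in $\wn_{v_1}-\wn_u$, regroup it into a term controlled by $\w\geq 1$ and a term controlled by $\w<2$, conclude that the outward edge stays outward, and then invoke Lemma~\ref{lm:local-star-UB} to prevent any other edge from turning outward. The only difference is notational---you write all flows uniformly as $f_i=\flow{u}{v_i}$, while the paper splits into $\flow{u}{v}$ for the outward edge and $\flow{w}{u}$ for the inward ones---but the underlying computation and logic are identical.
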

\begin{proof}
Let $u$ be the center node.
Suppose that at time $t$, the system is in a configuration in which
the flow on exactly one edge $(u,v)$ is directed away from
the center; so $\wn_u(t) < \wn_v(t)$.
By Lemma~\ref{lm:local-star-UB} there can be at
most one edge on which the flow is directed away from $u$,
and $(u,v)$ is such an edge.
The changes in interaction masses are
\begin{align*}
\dot{\wn}_u(t) &= (\w-1) \sum_{w \neq u,v}\flow{w}{u}(t) - (\w-1)\flow{u}{v}(t) \\
\dot{\wn}_v(t) &= (\w-1)\flow{u}{v}(t) + \sum_{w \neq u,v}\flow{w}{u}(t).
\end{align*}
Their difference is
\begin{align*}
\dot{\wn}_u(t)- \dot{\wn}_v(t)
& \leq (\w-2) \sum_{w \neq u,v}\flow{w}{u}(t)
    -2(\w-1)\flow{u}{v}(t).
\end{align*}
Because $\w < 2$, the right-hand side is negative, so $u$'s
interaction mass grows more slowly (or decreases faster) than $v$'s,
implying that the edge $(u,v)$ remains directed from $u$ to $v$.
Hence, the configuration is a sink configuration.
\end{proof}

Theorem~\ref{thm:local-star} now follows from
Lemmas~\ref{lem_gen_conv}, \ref{lm:local-star-UB} and
\ref{lm:local-star-w}, as follows.
If the system ever enters a configuration in which exactly one edge
has flow directed away from the center, then by
Lemma~\ref{lm:local-star-w}, it subsequently stays in this
configuration forever, so by Lemma~\ref{lem_gen_conv},
the system converges.
By Lemma~\ref{lm:local-star-UB}, the only other alternative is that the
system is always in the configuration with all edges directed inwards;
then, again, it converges by Lemma~\ref{lem_gen_conv}.

\section {Local model with $\w=1$: Convergence on a Path}
\label{app:5conv}
Assume that the active subgraph is an $n$-node path
with nodes $(1, 2, \ldots, n)$.
The endpoints of the path, nodes $1$ and $n$, always have interaction
masses no larger than their neighbors (nodes $2, n-1$), implying that
their masses $x_1(t), x_n(t)$ are monotonically non-increasing.
This implies convergence of $x_1(t)$, $x_n(t)$ as $t \to \infty$.
In the following proposition, we will exploit the convergence at
the endpoints to show that $x_2(t)$ and $x_{n-1}(t)$ must also
converge.
For a path of length at most $5$, this implies convergence of the
vector $x$ to an equilibrium, as the total mass stays constant.
Our technique does not apply beyond length $5$;
we do not know of a direct way to generalize
the argument inductively to paths of arbitrary lengths.


\begin{theorem} \label{thm_loc_5}
Consider the local model with $\w=1$.
If the influence graph is a path of $n \leq 5$ nodes,
then the system converges.
\end{theorem}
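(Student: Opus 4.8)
The plan is to build on the convergence already established for the endpoints. Since nodes $1$ and $n$ always have interaction mass no larger than their neighbors, their masses $x_1(t)$ and $x_n(t)$ are monotonically non-increasing and hence converge; the same holds for the interaction masses $\wn_1(t) = x_1(t) + x_2(t)$ and $\wn_n(t) = x_{n-1}(t) + x_n(t)$ only if $x_2(t)$ and $x_{n-1}(t)$ converge, so the crux is to force convergence of the second-from-end coordinates. First I would treat the $n \leq 3$ cases, which follow from Theorem~\ref{thm-loc-3-w} (taking $\w=1$ as a limiting or direct special case of the star argument, since a $3$-path is a star). The interesting regime is $n = 4, 5$.

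For $n = 4$ I would argue as follows. The net flow into node $2$ comes from nodes $1$ and $3$, and out of node $2$ toward $3$. Since $x_1(t)$ converges, the flow $\flow{1}{2}(t)$ on the pendant edge becomes negligible in the limit; what remains is the interior interaction between the two middle nodes (and, for $n=4$, the symmetric behavior at node $3$ fed by $x_4(t)$). The idea is to show that $x_2(t)$ must converge by monitoring whether $\flow{2}{3}(t)$ can oscillate indefinitely. Because $x_1(t) \to c_1$ and $x_4(t) \to c_4$ are asymptotically frozen, the effective dynamics on the middle pair become a two-variable system driven by (asymptotically) constant boundary masses, and I would show that any change of direction of the interior flow is ``one-way'' in a suitable potential, mirroring the sink-configuration analysis of Lemma~\ref{lem_loc_3_w} but now with an error term that vanishes as the endpoints settle. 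Concretely, I expect to track $\wn_2(t) - \wn_3(t)$ and show that once the boundary contributions have converged to within $\eps$, this difference can cross zero only a bounded number of times, after which Lemma~\ref{lem_gen_conv} applies.

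For $n = 5$ the argument must be bootstrapped: the center node $3$ is fed from both sides, so I would first establish convergence of $x_2(t)$ and $x_4(t)$ using the (now-converged) endpoints $x_1(t)$, $x_5(t)$ exactly as in the $n=4$ case, and only then deduce convergence of the central coordinate $x_3(t)$ from conservation of total mass, since $x_3(t) = 1 - \sum_{i \neq 3} x_i(t)$ and every other coordinate has been shown to converge. This layering---endpoints, then their neighbors, then the middle by mass conservation---is precisely why the method stops at length $5$: for $n \geq 6$ there is an interior edge both of whose endpoints are ``interior,'' so neither side is anchored by a converged boundary value, and the asymptotic-freezing trick breaks down.

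The main obstacle I anticipate is ruling out infinitely many direction changes on the interior edge(s). Unlike the $\w>1$ case, where the diagonal dominance in~\eqref{eqn:recursive} forces at most one flip, for $\w=1$ the off-diagonal coupling is borderline and the flow on an interior edge genuinely can reverse infinitely often in principle. The key technical step will therefore be to quantify how much the interior difference $\wn_2(t)-\wn_3(t)$ can move per step in terms of the (shrinking) residual boundary flows, and to show that the cumulative ``budget'' of oscillation is finite once $x_1(t), x_n(t)$ are within $\eps$ of their limits; controlling this budget---rather than any single algebraic identity---is where I expect the real work to lie.
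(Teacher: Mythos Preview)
Your high-level scaffolding matches the paper exactly: endpoints converge by monotonicity, then one proves $x_2(t)$ (and symmetrically $x_{n-1}(t)$) converges, and finally $x_3(t)$ follows from mass conservation when $n=5$. The $n\le 3$ case is handled in the paper by the one-line observation $x_2(t)=1-x_1(t)-x_3(t)$, so invoking the star theorem is unnecessary but harmless.

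The gap is in your proposed mechanism for the key step. You plan to show that $\wn_2(t)-\wn_3(t)$ crosses zero only finitely often and then invoke Lemma~\ref{lem_gen_conv}. This is exactly what the paper says \emph{cannot} be done: for $\w=1$ on paths with more than three nodes, flows on interior edges can change direction infinitely often, so no sink-configuration argument is available. Your fallback phrase ``cumulative budget of oscillation'' gestures at the right object but is attached to the wrong quantity (the sign of $\wn_2-\wn_3$) rather than to the flows themselves.

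The paper's actual argument bypasses direction changes entirely. Writing $F^+(t)$ and $F^-(t)$ for the cumulative positive inflow and outflow at node~$2$, it proves a pointwise comparison of the form
\[
\max(\flow{3}{2}(t),0)\;\le\; C\cdot \flow{1}{2}(t)
\qquad\text{(or the analogous bound on }\max(\flow{2}{3}(t),0)\text{)},
\]
with the constant $C$ depending on whether $y_1=\lim x_1(t)$ is zero or positive. Since the total flow ever leaving node~$1$ is at most $x_1(0)$, this makes $F^+$ (respectively $F^-$) bounded; being monotone, both converge, and hence so does $x_2(t)=x_2(0)+F^+(t)-F^-(t)$. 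The work lies in the elementary but slightly delicate case split on $y_1$ and the algebraic comparison of the two edge flows, not in controlling sign changes.
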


\begin{proof}
We already argued above that $x_1(t)$ and $x_n(t)$ converge.
If the path has 3 nodes, then $x_2(t)=1-x_1(t)-x_3(t)$ (by mass
conservation), so $x_2(t)$ converges as well.
So assume that $n \in \{4,5\}$.
Below, we show that $x_2(t)$ converges as well;
a symmetric argument applies to $x_{n-1}(t)$.
If the path has $4$ nodes, we are done at this point.
If the path has $5$ nodes, then
$x_3(t) = 1-x_1(t)-x_2(t)-x_4(t)-x_5(t)$ must converge as well.
Thus, $x(t)$ converges in all cases.

To prove that $x_2(t)$ converges, we distinguish two cases, based
on $y_1 = \lim_{t \to \infty} x_1(t)$.
\begin{enumerate}
\item If $y_1 = 0$, there are two subcases.
If $x_1(t) \geq x_2(t)$ for all $t$, then clearly, $x_2(t) \to 0$ as well.
Otherwise, there exists a $t_0$ with $x_2(t_0) > x_1(t_0)$.
By Equation~\eqref{eq:def-by-flows}, specialized to the local model and $\w=1$,
we obtain that for any $t$,

\begin{align*}
\dot{x}_1(t) &=  \ip \cdot x_1(t) \cdot
          \left( \frac{x_1(t)}{\wn_1(t)} + \frac{x_2(t)}{\wn_2(t)} - 1 \right),\\
\dot{x}_2(t) &= \ip \cdot  x_2(t) \cdot
          \left( \frac{x_1(t)}{\wn_1(t)} + \frac{x_2(t)}{\wn_2(t)}
            + \frac{x_3(t)}{\wn_3(t)} - 1\right).
\end{align*}

Then, clearly, $x_1(t) < x_2(t)$ implies $\dot{x}_1(t) < \dot{x}_2(t)$.
In particular, this means that $x_2(t) > x_1(t)$ for all $t \geq t_0$.
In turn, this inequality is used in the last step of the following
derivation:
\begin{align*}
\max(\flow{3}{2}(t),0)
& \leq {\ip \cdot}\frac{x_2(t)\, x_3(t)}{\wn_2(t)\, \wn_3(t)} \cdot x_1(t)\\
& = {\ip \cdot}\frac{x_1(t)\, x_2(t)\, x_3(t)}{\wn_1(t)\, \wn_2(t)} \cdot \frac{\wn_1(t)}{\wn_3(t)}\\
& =  \flow{1}{2}(t) \cdot \frac{x_1(t)+x_2(t)}{x_2(t)+x_3(t)+x_4(t)}\\
& \leq 2\, \flow{1}{2}(t).
\end{align*}
Thus, the total amount of flow entering node 2 after time $t_0$ is at
most $3 \int_{t_0}^{\infty} \flow{1}{2}(t) dt \leq 3\, x_1(t_0)$.
The reason for the last inequality is that flow never enters node 1, so
the total amount of flow that can leave node 1 for node 2 after $t_0$
is at most the amount that was at node 1 at time $t_0$.

Let $F^+(t)$ (resp., $F^-(t)$) be the total amount of flow
that has entered (resp., left) node 2 up to time $t$.
We have just proved that $F^+(t) - F^+(t_0) \leq 3 x_1(t_0)$.
Therefore, ${F^+(t)}$, being monotone and bounded, must converge.
Because flow can only leave node 2 when it was already there, we get
that $F^-(t) \leq x_2(0) + F^+(t)$ is also bounded, and must also converge.
Hence, $x_2(t) = x_2(0) + F^+(t) - F^-(t)$, being the difference
between two convergent functions, must also converge.

\item If $y_1 > 0$, we will pursue a similar argument, but this time
bounding the cumulative flow \emph{out of} node 2 instead of into it.
Because $x_2(t) \leq 1$ for all times $t$, this means that the
cumulative flow into node $2$ must also be bounded. Then, an identical
argument to the previous paragraph shows that
$x_2(t) = x_2(0) + F^+(t) - F^-(t)$ must converge.

\vspace{1mm}

No flow can ever leave node 2 for node 1, so we just need to bound the
flow from node 2 to node 3.
Flow leaves node 2 for node 3 at time $t$ if and only if
$x_4(t) \geq x_1(t)$. Since $x_1(t)\geq y_1$, it follows that
$\wn_2(t), \wn_3(t) \geq y_1$ as well.
Therefore, we can bound the non-negative flow from node 2 to node 3 as
follows:
\begin{align*}
\max(\flow{2}{3}(t),0)
& \leq  {\ip \cdot} \frac{x_2(t)\, x_3(t)}{\wn_2(t)\, \wn_3(t)} \cdot x_4(t)
\; \leq \; {\ip \cdot} \frac{x_2(t)\, x_3(t)}{y_1^2} \\
& \leq  {\ip \cdot}\frac{x_1(t)\, x_2(t)\, x_3(t)}{y_1^3}
\; \leq \;  \frac{1}{y_1^3} \cdot {\ip \cdot} \frac{x_1(t)\, x_2(t)\, x_3(t)}{\wn_1(t) \wn_2(t)}
\; = \; \frac{1}{y_1^3} \cdot \flow{1}{2}(t).
\end{align*}
Thus, the total positive flow from node 2 to node 3 is bounded above
by a constant times the total flow from node 1 to node 2,
which in turn is at most $x_1(0)$.
\end{enumerate}
\vspace{-5mm}
\end{proof}

\end{document}